\newcommand{\eps}{{\varepsilon}}
\newcommand{\defproblem}[3]{

\bigskip
\noindent
\begin{center}\fbox{
	\begin{minipage}{.97\linewidth}
	\textsc{#1}
	
	\smallskip
	\noindent\begin{tabular}{@{}l@{ }l}
	\emph{Input:} & \begin{minipage}[t]{\linewidth-\widthof{Question:\ }} #2\end{minipage}\\
	\emph{Question:} & \begin{minipage}[t]{\linewidth-\widthof{Question: \ }} #3\end{minipage}
	\end{tabular}
	\end{minipage}
}
\end{center}

\medskip
}
\newcommand{\Oh}{{\cal O}}
\newcommand{\tOh}{{\widetilde {\cal O}}}
\newcommand{\algo}{{\cal A}}
\newcommand{\lemref}[1]{Lemma~\ref{lem:#1}}
\newcommand{\lemrefs}[2]{Lemmas~\ref{lem:#1} and~\ref{lem:#2}}
\newcommand{\lePara}[1][P]{\ensuremath{\le_{#1}}\xspace}
\newcommand{\bin}{\mathrm{bin}}
\newcommand{\vin}{v_\mathrm{in}}
\newcommand{\vout}{v_\mathrm{out}}
\newcommand{\uin}{u_\mathrm{in}}
\newcommand{\uout}{u_\mathrm{out}}
\newcommand{\colors}{\mathrm{col}}
\newcommand{\pathseq}{\mathrm{path}}
\newcommand{\cspecial}{\tilde{c}}
\newcommand{\clique}{\mathcal{C}}
\newcommand{\RestateRemark}[1]{{\normalfont\bfseries #1}}
\newcommand{\RestateInit}[1]{\newcommand{#1}{}}
\newcommand{\RestateGo}[1]{\renewcommand{#1}{(Restated)}}
\title{The NFA Acceptance Hypothesis: Non-Combinatorial and Dynamic Lower Bounds}
\begin{document}
\maketitle

\begin{abstract}
We pose the fine-grained hardness hypothesis that the textbook algorithm for the NFA Acceptance problem is optimal up to subpolynomial factors, even for dense NFAs and fixed alphabets. 

We show that this barrier appears in many variations throughout the algorithmic literature by introducing a framework of Colored Walk problems. These yield fine-grained equivalent formulations of the NFA Acceptance problem as problems concerning detection of an $s$-$t$-walk with a prescribed color sequence in a given edge- or node-colored graph.
For NFA Acceptance on \emph{sparse} NFAs (or equivalently, Colored Walk in sparse graphs), a tight lower bound under the Strong Exponential Time Hypothesis has been rediscovered several times in recent years.
We show that our hardness hypothesis, which concerns \emph{dense} NFAs, has several interesting implications:
\begin{itemize}
	\item It gives a tight lower bound for Context-Free Language Reachability. This proves conditional optimality for the class of 2NPDA-complete problems, explaining the \emph{cubic bottleneck} of interprocedural program analysis.
	\item It gives a tight $(nm^{1/3} + m)^{1-o(1)}$ lower bound for the Word Break problem on strings of length $n$ and dictionaries of total size $m$.
	\item It implies the popular OMv hypothesis. Since the NFA acceptance problem is a static (i.e., non-dynamic) problem, this provides a static reason for the hardness of many dynamic problems.
\end{itemize}
Thus, a proof of the NFA Acceptance hypothesis would resolve several interesting barriers. Conversely, a refutation of the NFA Acceptance hypothesis may lead the way to attacking the current barriers observed for Context-Free Language Reachability, the Word Break problem and the growing list of dynamic problems proven hard under the OMv hypothesis. 
\end{abstract}


\section{Introduction}
\label{sec:intro}

Consider a classic problem that lies at the heart of introductory undergraduate courses on the theory of computation: Given a nondeterministic finite automaton (NFA) $M$ and a string~$x$ over some alphabet $\Sigma$, determine whether $x$ is accepted by $M$. The textbook algorithm for solving this problem uses dynamic programming and solves the problem in time $O(|M|\cdot |x|)$. 
Optimality of this algorithm is known for \emph{sparse} NFAs, assuming the Strong Exponential Time Hypothesis and up to subpolynomial factors, as we will discuss in detail in Section~\ref{sec:support}. 
We put forth the following fine-grained hardness hypothesis, in which we conjecture optimality of this decades-old algorithm also for \emph{dense} NFAs. 

\begin{hypothesis}[NFA Acceptance hypothesis, informal version] The textbook $O(|M|\cdot |x|)$ time algorithm for NFA Acceptance over any fixed alphabet $\Sigma$ is optimal up to subpolynomial factors, even if $M$ is a \emph{dense} NFA.
\end{hypothesis}

In this paper, we shed light on the many guises of the NFA Acceptance problem and highlight the ramifications of the above hardness hypothesis on fine-grained complexity theory in P.

\subsection{The Many Guises of the NFA Acceptance Problem}

To express equivalent formulations of the NFA Acceptance problem, consider the following problem: Given a directed, simple graph $G=(V,E)$ with edge colors $c: E\to \Sigma$, distinguished nodes $s,t\in V$ and a color sequence $c_1,\dots, c_\ell \in \Sigma$, determine whether there is a walk of length $\ell$ from $s$ to $t$ such that 
the color sequence of the traversed edges is equal to $c_1,\dots, c_\ell$. In a graph with $n$ nodes and $m\ge n$ edges, this problem is easily solved in time $O(m\ell)$, by maintaining a set $S\subseteq V$ of states reachable from $s$ via a walk with color sequence $c_1,\dots, c_i$, over all $1\le i \le \ell$.

There are natural variants of this problem, by considering \emph{node} colors rather than \emph{edge} colors, \emph{undirected} rather than \emph{directed} graphs, as well as various restrictions on the \emph{size of the alphabet}: We call these problems \textsc{Directed/Undirected $\Sigma$-Edge-Colored/$\Sigma$-Node-Colored Walk}, see Section~\ref{sec:framework} for a formal definition. All of these problems can be solved in time $O(m\ell)$. 
These Colored Walk problems arise in different communities, typically with different generalizations:

\begin{itemize}
	\item (Formal language theory:) NFA Acceptance is a fundamental problem in formal language theory, as it is the membership problem for regular languages.  \textsc{Directed $\Sigma$-Edge-Colored Walk} is precisely the special case in which the NFA has no loops (transitions from some state~$q$ to itself) or multiple transitions between any two states (with different labels from the terminal set).\footnote{Formulating the NFA Acceptance problem as a colored walk problem is not uncommon, see, e.g.,~\cite{ArenasCJR22}. Here we use the Colored Walk problem as an umbrella to express various related problems in a common language.} In Section~\ref{sec:equivalences}, we will show that this special case is fine-grained equivalent to the general case of the NFA Acceptance problem.
	\item (Hidden Markov models:) A Hidden Markov Model (HMM) is a Markov chain in which every state has a distribution over possible observations $\Sigma$ it emits. In the \textsc{Viterbi Path} problem, the task is to determine, given a sequence $c_1,\dots, c_\ell$ of observations over $\Sigma$ and an HMM $M$, the most likely walk through $M$ to emit $c_1,\dots, c_\ell$. An extension of the NFA Acceptance algorithm, well-known as Viterbi's algorithm~\cite{Viterbi67}, solves this problem in time $O(m\ell)$, where $m$ denotes the size of $M$. Exploiting the weighted nature of this problem, Viterbi's algorithm has been proven optimal up to subpolynomial factors under the weighted $k$-Clique hypothesis~\cite{BackursT17}.  \textsc{$\Sigma$-Node-Colored Walk} is equivalent to the following interesting \emph{unweighted} special case of the problem: in an HMM for which every node $v$ emits a single observation $\sigma_v$ with probability 1, determine whether there exists \emph{any} walk with positive probability for emitting the observations $c_1,\dots, c_\ell$.\footnote{The equivalence of detecting an $s$-$t$-walk with a given color sequence and detecting \emph{any} walk with a given color sequence is discussed in Section~\ref{sec:equivalences}.} 
	\item (Combinatorial Pattern matching, Bioinformatics:) In some applications in combinatorial pattern matching, e.g., in bioinformatics, it is natural to represent a \emph{set} of strings using a node-labeled graph $G$: Here, every node $u$ is equipped with a string $L_u\in \Sigma^*$ and the represented set of strings is simply the list of all concatenations $L_{u_1} \dots L_{u_k}$ over all walks $u_1 \dots u_k$ in $G$. A natural question is to perform pattern matching on these graphs, i.e., find an occurrence, or more generally, all occurrences of a given pattern $P$ in some string $T$ occurring in $G$. A line of work, see, e.g., \cite{ManberWu92, AmirLL00, Navarro00, JainZGA20} derives algorithms for exact and approximate matches of a string, see also \cite{EquiMTG23} for references. $\Sigma$-Node-Colored Walk is equivalent to detecting an exact occurrence in a graph in which every node is labeled by a single character in $\Sigma$.\footnote{See previous footnote.}
	\item (Database Theory:) An important topic in the area of graph databases is \emph{regular path queries}, see, e.g.~\cite{MendelzonW95,Barcelo13,CaselS23} and references therein. Here, a database is given as a graph $G=(V,E)$ with edge labels over $\Sigma$ and the aim is to support queries, which on input a regular expression $q$, return information about walks in the graph whose labels match the regular expression $q$ (variants of these queries include emptiness, counting, or enumeration of such walks or their endpoints). The emptiness query can be solved in time $O((|V|+|E|)|q|)$ using a straightforward approach of constructing a product graph of $G$ and an NFA representing~$q$; conditional optimality of this approach has been investigated in~\cite{CaselS23}. The special case in which $q$ describes a single string of length $\ell$ is precisely the \textsc{$\Sigma$-Edge-Colored Walk} problem.\footnote{See previous footnotes.}
\end{itemize}

Interestingly, it turns out that all the special cases discussed above are equivalent in a fine-grained sense. In particular, the NFA Acceptance problem, as well as all $n^{o(1)}$-Edge-Colored Walk and $n$-Node-Colored Walk problems are equivalent to the following seemingly simple problem:

\defproblem{Directed 2-Edge-Colored Walk}{Directed graph $G=(V,E)$, $s,t \in V$, $c:E\to \{1,2\}$, $\ell \in \mathbb{N}$, colors $c_1,\ldots,c_\ell \in \{1,2\}$}{Exist $s = v_0,v_1,\ldots,v_\ell = t$ with $(v_{i-1},v_i) \in E$ and $c(v_{i-1},v_i)= c_i$ for all $1 \le i \le \ell$?}

For a formal statement of this equivalence and its proof, see Sections~\ref{sec:framework} and~\ref{sec:equivalences}.

Thus, the same barrier has been observed across different communities, with extensions of the $O(m\ell)$-time algorithm in various distinct directions. The only generalization with a tight conditional lower bound for all graph densities is known in the weighted setting of Viterbi's algorithm~\cite{BackursT17}.

A refutation of the NFA Acceptance hypothesis would thus immediately be interesting for a host of communities -- as a case in point, several works derive essentially the same conditional lower bounds for sparse graphs and for combinatorial algorithms, see Section~\ref{sec:support}.  Furthermore, refuting the hypothesis is a prerequisite to obtaining faster algorithms also for the discussed generalizations, e.g., for \emph{approximating} the \textsc{Viterbi Path} problem.

\subsection{Support for the Hypothesis}
\label{sec:support}

The current support for the hypothesis is threefold: (I) There are tight conditional lower bounds for \emph{sparse} NFAs based on the Orthogonal Vectors Hypothesis (OVH) and thus the Strong Exponential Time Hypothesis (SETH), (II) there is a tight combinatorial lower bound for all densities of NFAs based on the combinatorial $k$-Clique hypothesis, and (III) simply the lack of a faster algorithms despite the centrality and age of this problem. We discuss these reasons in detail:

\paragraph{Support I: Hardness for Sparse NFAs}

It is known that an $O( (n\ell)^{1-\epsilon})$-time algorithm for acceptance of a length-$\ell$ string by an $n$-state NFA would refute OVH and SETH. This can be viewed as a tight conditional lower bound for \emph{sparse} NFAs with $m=O(n)$ transitions, but it fails to provide a tight lower bound for NFAs with $m=\Theta(n^\alpha)$ transitions with $1 < \alpha \le 2$. Several versions of this result have been proven and published in recent years. A subset of the authors have learned the result phrased for NFAs from Russell Impagliazzo in 2015. Since regular expressions of size $n$ can be easily converted to an NFA with $O(n)$ states and transitions, the quadratic conditional lower bounds for regular expression membership/matching in~\cite{BackursI16,BringmannGL17} translate to NFA Acceptance lower bounds. Subsequently, the lower bound for sparse NFAs has been given in the language of string matching in labeled graphs in~\cite{EquiMTG23}, in the language of NFA Acceptance in~\cite{PotechinS20}, and in the language of regular path queries in graph databases in~\cite{CaselS23}.\footnote{By the  the fine-grained equivalence of the Colored Walk problems of Section~\ref{sec:framework}, these results turn out to be essentially equivalent in their main statement. However, some aspects of the reductions can differ.}

Interestingly, the same lower bound already holds under a weaker hypothesis, specifically Formula-SETH~\cite{AbboudB18, Schepper20, GibneyHT21}.

\paragraph{Support II: Combinatorial Hardness}

Beyond sparse NFAs, the literature provides, perhaps a bit implicitly, a (combinatorial) reduction from $k$-Clique to NFA Acceptance with a dense NFA with $n$ states with the following implications: (1) An $O( (\ell n^2)^{1-\epsilon} )$-time \emph{combinatorial}\footnote{Here, a \emph{combinatorial} algorithm refers to an algorithm that avoids the algebraic techniques underlying fast matrix multiplication algorithms, see Section~\ref{sec:non-combinatorial} for a discussion.} NFA Acceptance algorithm for any $\ell=n^\beta, \beta > 0$ would give a \emph{combinatorial} $O(n^{(1-\epsilon')k})$-time $k$-Clique algorithm for all sufficiently large $k$, and (2) an $O( (\ell n^{2})^{\omega/3-\epsilon} )$-time  NFA Acceptance algorithm would give an $O(n^{(\omega/3)k-\epsilon'})$-time $k$-Clique algorithm for sufficiently large $k$, breaking the state of the art for current $k$-Clique algorithms, see~\cite{NesetrilP85,AbboudBVW18}.

These lower bounds can be obtained either by adapting the reduction from weighted $k$-Clique in~\cite{BackursT17} to the unweighted case, or as a special case of~\cite[Theorem I.5]{AbboudBBK17}\footnote{In the theorem, simply choose $\alpha_N=\alpha_n=\beta$.}, or as an appropriate generalization of the reduction from Triangle Detection to NFA Acceptance with $\ell \approx n$ in~\cite{PotechinS20}, or as an appropriate generalization of the reduction from Triangle Detection to regular path queries with queries of length $\ell\approx n$ in~\cite{CaselS23}. For a self-contained reduction see Section~\ref{sec:otherHypos}.

These reductions suggest that to break the NFA Acceptance hypothesis, a non-trivial application of fast matrix multiplication techniques is needed. Put differently, our hypothesis boils down to postulating that \emph{fast matrix multiplication is not applicable for the NFA Acceptance problem}. 

\paragraph{Support III: Long-standing State of the Art}

Finally, the NFA Acceptance hypothesis is plausible simply due to the lack of any improved algorithms. To the best of our knowledge, the best algorithm for NFA Acceptance for dense NFAs over a constant-sized alphabet $\Sigma$ runs in time $n^2\ell/2^{\Omega(\sqrt{\log n})}$ (see Section~\ref{sec:omv}).

Let us contrast the state of the art for NFA Acceptance, i.e., the membership problem for regular languages, to the membership problem for context-free languages, specifically context-free grammar parsing. The fine-grained complexity of this problem has parallels to NFA Acceptance: The best combinatorial algorithms decide membership of a length-$n$ string in a fixed grammar $\Gamma$ in mildly subcubic time $O(n^3/\text{polylog } n)$, see \cite{AbboudBVW18} for an overview. Any truly subcubic combinatorial algorithm would refute the combinatorial $k$-Clique algorithm~\cite{AbboudBVW18}. However, by a highly non-trivial application of fast matrix multiplication, Valiant~\cite{Valiant75} gave an algorithm for context-free grammar parsing in time $O(n^\omega)$ for constant-sized grammars, but this is known already since 1975! It appears unlikely that a similar algorithm using fast matrix multiplication for NFA Acceptance has been overlooked for decades.

In Section~\ref{sec:natural-approaches}, we also discuss why natural algorithmic approaches towards refuting the NFA Acceptance hypothesis appear to fail.

\medskip
It is apparent that the NFA Acceptance hypothesis poses a significant barrier.
Finally, we remark that \cite{PotechinS20} recently posed our NFA Acceptance hypothesis as an open problem.\footnote{Potechin and Shallit note, following the SETH-based lower bound for sparse NFAs: ``However, this does not rule out
	an improvement when the NFA is dense, and we leave it as an open problem to either find a significant improvement to this algorithm, or show why such an improvement is unlikely.''}

\subsection{Evidence Against the Hypothesis}

The only evidence against the NFA Acceptance hypothesis that we are aware of is the existence of (co-)nondeterministic verifiers with running time $O( (n^2 \ell)^{1-\epsilon})$. This is even known for the weighted generalization to the Viterbi Path problem~\cite{BackursI16}, but can be proven directly (see Section~\ref{sec:otherHypos}) or follows as a consequence of our reduction in Section~\ref{sec:cfl-reach} combined with~\cite{ChistikovMS22}. This rules out tight \emph{deterministic} reductions from SAT and OV to NFA Acceptance with dense NFAs, assuming a nondeterministic variant of SETH~\cite{CarmosinoGIMPS16}. However, there are several hypotheses with a similar status, in particular, the APSP hypothesis, 3SUM hypothesis and the Hitting Set hypothesis.

\subsection{Applications I:  Non-combinatorial Lower Bounds}
\label{sec:non-combinatorial}

As our first major consequence of the NFA Acceptance hypothesis, we give novel tight conditional lower bounds against general algorithms where previously only combinatorial lower bounds were known.

\paragraph{Combinatorial Lower Bounds}
Some of the oldest conditional lower bounds in the polynomial-time regime are based on Boolean matrix multiplication (BMM), see, e.g.~\cite{Lee02} (for a survey of fine-grained complexity theory in P, we refer to~\cite{VassilevskaW18}). The best known algorithm for BMM uses fast matrix multiplication over the integers, yielding an $n^{\omega+o(1)}$-time algorithm, where $\omega$ is the matrix multiplication exponent, with a current state-of-the-art bound of $\omega < 2.371552$ due to works of~\cite{Strassen69, CoppersmithW90, VassilevskaW12, LeGall14, AlmanVW24, DuanWZ23, VWilliamsXXZ24}, among others.

Unfortunately, theoretically fast matrix multiplication algorithms generally turn out to be impractical, as the constants hidden in the $O$-notation are usually too large for the problem sizes observed in applications. Therefore, it has become increasingly popular to use the notion of \emph{combinatorial} algorithms to exclude the use of algebraic techniques underlying current fast matrix multiplications, in an attempt to study algorithms whose asymptotic complexity translates into practical running time. In this direction, it has been hypothesized that BMM does not admit combinatorial $O(n^{3-\epsilon})$-time algorithms with $\epsilon> 0$. This hypothesis is in fact equivalent to the nonexistence of truly subcubic combinatorial algorithms for triangle detection, and gives tight combinatorial lower bounds for problems such as the Sliding Window Hamming Distance (see~\cite{GawrychowskiU18}), CFL Reachability~\cite{ChatterjeeCP18} and many more. 
A generalization of this hypothesis is that for no $k\ge 3$ and $\epsilon > 0$, $k$-Clique admits a combinatorial $O(n^{k-\epsilon})$-time algorithm, which implies a combinatorial version of our NFA Acceptance hypothesis, see Section~\ref{sec:otherHypos}.

However, there are several downsides to the notion of \emph{combinatorial} lower bounds: First and foremost, the notion of \emph{combinatorial algorithm} is not formally defined, which is hardly acceptable (see e.g.\ the discussion in~\cite{HenzingerKNS15}). 
Second, in theoretical computer science we are usually interested in the optimal asymptotic worst-case complexity, so why would we exclude algorithmic techniques leading to an improved complexity?
Third, Strassen's original algorithm is sometimes found to be practical~\cite{HuangSHG16}. 
Hence, there are several reasons to avoid ``combinatorial lower bounds'', as they leave open the very real possibility that faster algorithms exploiting fast matrix multiplication techniques exist. 

How can we distinguish between problems for which faster algorithms via fast matrix multiplications exist, and those for which such an improvement is unlikely?
Generally speaking, the current state of the art for problems with tight combinatorial lower bounds can be categorized as follows:
\begin{enumerate}
\item (easy:) Problems whose asymptotic complexity is upper bounded by the complexity of fast matrix multiplication, e.g., context-free grammar recognition for constant-sized grammars~\cite{Valiant75}, maximum node-weighted triangle~\cite{CzumajL09}, and others.
\item (intermediate:) Problems for which improvements via fast matrix multiplication are possible, but they do not necessarily lead to quasilinear time in the input if $\omega=2$: e.g., Sparse Triangle Listing or intermediate $O(n^{\frac{3+\omega}{2}})$-time problems such as $(\min,\max)$ product, Dominance product, Equality product, and All-Edges Monochromatic Triangle. Only for a few of these problems, conditional lower bounds give evidence why they do not appear to belong to the first category, e.g.~\cite{Patrascu10, LincolnPVW20, VassilevskaWX20, ChanVWX21}.
\item (difficult:) Problems for which even using fast matrix multiplication, no improved algorithms are known, e.g., APSP, Sliding Window Hamming Distance, Word Break, Context-free Language Reachability, Klee's Measure Problem. To establish that a problem lies in this category, currently only two hardness assumptions appear applicable: If the problem can express \emph{weights}, a reduction from APSP or more generally the weighted $k$-Clique hypothesis is plausible, e.g.~\cite{VassilevskaWW18,BackursDT16,BackursT17}. A second possibility is a reduction based on the 3-uniform Hyperclique hypothesis~\cite{LincolnVWW18}, which was used, e.g., for Klee's measure problem in 3D~\cite{Kunnemann22}.
\end{enumerate}

The NFA Acceptance hypothesis yields a new route to establish that a problem belongs to the third category. We describe two such applications: (1) Context-free language reachability and, more generally, all 2NPDA-complete problems (which suffer from the \emph{cubic bottleneck} in interprocedural program analysis), and (2) the Word Break problem. 

\paragraph{CFL Reachability and 2NPDA-hard problems}
In Context-free Language (CFL) Reachability for a fixed context-free grammar $\Gamma$, we are given a $\Sigma$-edge-colored directed graph $G$ and distinguished nodes $s,t$, and the task is to determine whether there is a walk from $s$ to $t$ such that the sequence of traversed edge colors spells a word in the language given by $\Gamma$. The CFL Reachability problem has many applications in program analysis, verification and database theory (see, e.g., the references in~\cite{ChistikovMS22,KoutrisD23}). A classic algorithm solves CFL Reachability in time $O(n^3)$~\cite{Yannakakis90}, which has been slightly improved to time $O(n^3 / \log n)$~\cite{Rytter85,Chaudhuri08}. In fact, these algorithms extend to the generalization of computing all pairs $(s,t)$ for which such a walk exists.

The inability to obtain truly subcubic time algorithms for CFL Reachability and related problems has led researchers to investigate its relationship to the recognition problem for two-way nondeterministic pushdown automata (2NPDA), which admits a classic cubic time algorithm due to~\cite{AhoHU68}. Several problems, including CFL Reachability, pushdown automata emptiness, and related problems in data flow analysis, have been proven 2NPDA-complete in the sense that they are subcubic equivalent to 2NPDA recognition~\cite{Neal89,MelskiR00,HeintzeM97}, see also~\cite{ChistikovMS22}.

Since then, there have been attempts to obtain conditional lower bounds for 2NPDA-complete problems: For \emph{combinatorial} algorithms a conditional lower bound ruling out time $O(n^{3-\eps})$ for any $\eps > 0$ was shown in~\cite{ChatterjeeCP18}. However, it was proven that obtaining a tight lower bound based on SETH/OVH using deterministic reductions would violate NSETH~\cite{ChistikovMS22}. Recently, an interesting attempt was made to show that the All-Pairs version of CFL Reachability is strictly harder than matrix multiplication (i.e., does not belong to the first category in the above list) via a reduction from All-Edges Monochromatic Triangle~\cite{KoutrisD23}.\footnote{Unfortunately, the reduction from Monochromatic Triangle to All-Pairs CFL Reachability given in~\cite{KoutrisD23} appears to be flawed. In this reduction, a graph is created where every edge in the original graph is replaced by a line graph, which in general leads to a quadratic blow-up in the number of nodes. In personal communication, the authors confirm this issue.}

In Section~\ref{sec:cfl-reach}, we show that proving the NFA Acceptance hypothesis settles the cubic bottleneck of 2NPDA-complete problems by showing that the NFA Acceptance hypothesis implies that CFL Reachability, and thus, all 2NPDA-hard problems, have no truly subcubic algorithms. In particular, this gives conditional optimality of the algorithms due to Yannakakis~\cite{Yannakakis90} and Aho et al.~\cite{AhoHU68}.   

\paragraph{Word Break}
In the Word Break problem we are given a string $S$ and a dictionary $D$ (i.e., $D$ is a set of strings) and we ask whether $S$ can be split into dictionary words, i.e., whether we can partition $S$ into substrings such that each substring is in the set $D$. Denoting the length of $S$ by $n$ and the total length of all dictionary words by $m$, this problem has a relatively simple randomized algorithm running in time $\tOh(n m^{1/2} + m)$.\footnote{Throughout the paper we write $\tOh(T) := \bigcup_{c \ge 0} O(T \log^c T)$, where we denote $\log^c T = (\log(T))^c$.} This running time was first improved to $\tOh(n m^{1/2 - 1/18} + m)$~\cite{BackursI16} and then to the current state of the art $O(n (m \log m)^{1/3} + m)$~\cite{BringmannGL17}. In~\cite{BringmannGL17} it was also shown that Word Break has no combinatorial algorithm running in time $O(n m^{1/3 - \eps} + m)$ for any $\eps>0$, assuming a hypothesis on combinatorial $k$-Clique. Under standard, non-combinatorial hypotheses, no superlinear running time lower bound is known.

In Section~\ref{sec:wordbreak}, we settle this issue by proving a tight conditional lower bound for Word Break under the NFA Acceptance hypothesis.

\subsection{Application II: A Static Reason for Dynamic Hardness}

In 2015, Henzinger, Forster, Nanongkai and Saranurak~\cite{HenzingerKNS15} formulated the OMv hypothesis which unifies and strengthens many conditional lower bounds for data structure problems and dynamic problems and has since been used in various areas such as graph algorithms~\cite{HenzingerKNS15,Dahlgaard16,AbboudD16,HenzingerW21}, string algorithms~\cite{CliffordGLS18,KempaK22,CliffordGK0U22}, computational geometry~\cite{DallantI22,LauR21}, linear algebra~\cite{BrandNS19,JiangPW23}, formal languages~\cite{GibneyT21}, and database theory~\cite{BerkholzKS17,CaselS23,KaraNOZ23}.
In fact, almost all conditional lower bounds for dynamic problems known under BMM or other hypotheses such as 3SUM or APSP can also be shown under the OMv hypothesis.\footnote{Lower bounds from SETH seem to be an exception from this rule of thumb.}
Thus, there is one reason that allows to rule out faster algorithms for a wealth of problems, without restricting the class of algorithms that may be used, and without any undefined notions.

Interestingly, it turns out that \emph{the NFA Acceptance hypothesis implies the OMv hypothesis}. The reduction is very direct and has nice implications: (1) It establishes NFA Acceptance as a static reason for the hardness of a wealth of dynamic problems. (2) It gives additional reason to believe in the OMv hypothesis, since NFA Acceptance is an additional problem that has been studied for decades and lacks polynomially improved algorithms.
(3) We obtain an improved running time for NFA Acceptance by a factor $2^{\Theta(\sqrt{\log n})}$, by the improved algorithm for OMv~\cite{LarsenW17}.
(4) Since NFA Acceptance is a static problem, we can try to obtain similar lower bounds as from OMv now for static versions of dynamic problems.
We give the reduction in Section~\ref{sec:omv}.

\section{Equivalent formulations: Colored Walk Framework}
\label{sec:framework}

An NFA $M$ is a tuple $(Q, \Sigma, \delta, q_0, F)$ with $\delta \subseteq Q \times \Sigma \times Q$, $q_0 \in Q$ and $F \subseteq Q$. We say that $M$ accepts a string $x\in \Sigma^*$ if and only if there exists some sequence $q_1,\dots, q_{|x|}\in Q$ with $(q_i, x_{i+1}, q_{i+1})\in \delta$ for all $0\le  i < |x|$ and $q_{|x|}\in F$. Note that we do not allow $\epsilon$-transitions.\footnote{An $\epsilon$-transition is a transition labeled with the empty word $\epsilon$. Such a transition can be taken by the NFA without reading a character of $x$. While for dense NFAs, allowing $\epsilon$-transitions yields a linear-time equivalent version of the NFA acceptance problem, we leave it as an open problem whether $\epsilon$-transitions make the problem more difficult for sparser NFAs.} The NFA Acceptance problem asks to determine, given an NFA $M$ and string $x$ over $\Sigma$, whether $M$ accepts $x$.
We usually use $n=|Q|$ to denote the number of states, $m=|\delta|$ to denote the number of transitions and $\ell=|x|$ to denote the length of the string $x$.

We are ready to state our hypothesis formally:

\begin{hypothesis}[NFA Acceptance hypothesis, formal version] Let $1\le \alpha \le 2$ and $\beta, \epsilon > 0$ be arbitrary. There is no (randomized) algorithm that solves the NFA Acceptance problem for NFAs with $n$ states and $m=\Theta(n^\alpha)$ transitions, alphabet $\Sigma$ of size $n^{o(1)}$ and strings of length $\ell = \Theta(n^{\beta})$ in time $O((m\ell)^{1-\epsilon}) = O(n^{(\alpha+\beta)(1-\epsilon)})$.
\end{hypothesis}

It is equivalent to state the hypothesis only for dense NFAs, i.e., $\alpha = 2$, as we show next.

\begin{lemma}\label{lem:dense-is-hardest}
The NFA Acceptance hypothesis is equivalent to the following statement: Let $\beta, \epsilon > 0$ be arbitrary. There is no (randomized) algorithm that solves the NFA Acceptance problem for NFAs with $n$ states and $m=\Theta(n^2)$ transitions, alphabet $\Sigma$ of size $n^{o(1)}$ and strings of length $\ell = \Theta(n^{\beta})$ in time $O((n^2 \ell)^{1-\epsilon}) = O(n^{(2+\beta)(1-\epsilon)})$.
\end{lemma}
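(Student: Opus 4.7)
The forward direction (general hypothesis implies dense hypothesis) is immediate: the dense statement is merely the $\alpha=2$ instance of the general hypothesis. The main work is the reverse direction, which I would prove by contrapositive via a padding reduction that inflates the state space to thin out the density while preserving acceptance.

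Concretely, suppose the general NFA Acceptance hypothesis fails, witnessed by some $\alpha^* \in [1,2]$, $\beta^*, \epsilon^* > 0$ and a randomized algorithm $A$ running in time $O((m\ell)^{1-\epsilon^*})$ on NFAs with $n$ states, $\Theta(n^{\alpha^*})$ transitions and strings of length $\Theta(n^{\beta^*})$. If $\alpha^* = 2$, we are already done (this directly refutes the dense hypothesis), so assume $\alpha^* < 2$. Given a dense instance $(M', x)$ with $|Q'| = N$ states, $|\delta'| = \Theta(N^2)$ transitions and string length $\ell' = \Theta(N^{\beta'})$, where I set $\beta' := 2\beta^*/\alpha^*$, I would construct a padded NFA $M$ by (i) adding $n - N$ isolated ``dead'' states to raise the state count to $n := \lceil N^{2/\alpha^*} \rceil$, (ii) adjoining a fresh padding symbol $\#$ to the alphabet (keeping $|\Sigma| = n^{o(1)}$), and (iii) inserting enough $\#$-labeled transitions among the dead states to bring $|\delta|$ precisely into the range $\Theta(n^{\alpha^*}) = \Theta(N^2)$. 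Since $\#$ does not appear in $x$ and the new states are not reachable from $q_0'$, clearly $M$ accepts $x$ iff $M'$ does.

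Running $A$ on $(M,x)$ solves the dense instance in time
\[
O\bigl((n^{\alpha^*} \cdot \ell')^{1-\epsilon^*}\bigr) = O\bigl((N^2 \cdot N^{\beta'})^{1-\epsilon^*}\bigr) = O\bigl((N^2 \ell')^{1-\epsilon^*}\bigr),
\]
using $n = \Theta(N^{2/\alpha^*})$ and $\ell' = \Theta(N^{\beta'}) = \Theta(n^{\beta^*})$ so that $A$'s running-time hypothesis is applicable. This refutes the dense hypothesis for the specific choice $(\beta', \epsilon^*)$, completing the contrapositive.

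There is no serious obstacle here; the reduction is essentially bookkeeping. The only subtleties are (a) making sure the transition count lands in the correct $\Theta(n^{\alpha^*})$ window, which the $\#$-padding handles since $\Theta(N^2) = \Theta(n^{\alpha^*})$ already, (b) checking that $\ell'$ matches exactly the length class $\Theta(n^{\beta^*})$ that $A$ expects, which is arranged by the choice $\beta' = 2\beta^*/\alpha^*$ (so $\beta' > 0$), and (c) ensuring the alphabet remains of size $n^{o(1)}$ after adding $\#$, which is trivial. The argument extends identically to the randomized setting.
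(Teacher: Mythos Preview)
Your proposal is correct and follows essentially the same padding argument as the paper: both add $\Theta(N^{2/\alpha^*})$ isolated states to a dense instance so that the transition density drops to $\Theta(n^{\alpha^*})$, and both choose $\beta' = 2\beta^*/\alpha^*$ so that the string length matches the regime the assumed fast algorithm handles. Your extra step of adding a $\#$ symbol and padding transitions is harmless but unnecessary (as you yourself note in subtlety~(a), the transition count already lands in $\Theta(n^{\alpha^*})$); the paper's proof simply omits it.
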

\begin{proof}
Assume that there exists $1\le \alpha' < 2$ and $\beta'> 0$ such that we can solve the problem with $m=\Theta(n^{\alpha'})$ transitions and sequence length $\ell = \Theta(n^{\beta'})$ in time $O(n^{(\alpha'+\beta')(1-\epsilon)})$. We define $\alpha=2$, $\beta=(2/\alpha')\beta'$ and consider any instance with $m=\Theta(n^\alpha)=\Theta(n^2)$ and $\ell = \Theta(n^\beta)$. Create an equivalent instance by simply introducing $n^{2/\alpha'} \ge n$ additional isolated states. This instance has $n'=\Theta(n^{2/\alpha'})$ states, the same number of transitions $m'=m=\Theta(n^2)=\Theta((n')^{\alpha'})$ and the same sequence length $\ell'=\ell = \Theta(n^\beta)=\Theta((n')^{\beta'})$. By our assumption, we can solve this instance in time $O((n')^{(\alpha'+\beta')(1-\epsilon)})=O(n^{(\alpha+\beta)(1-\epsilon)})$. This contradicts the assumption for $\alpha=2$ and an appropriate $\beta$, as desired.
\end{proof}

Furthermore, our conditional lower bounds for CFL Reachability and OMv follow already from the setting $\alpha = 2$ and $\beta=1$, which can be viewed as the core setting of the NFA Acceptance hypothesis: Determining whether a given $n$-state NFA $M$ over an $n^{o(1)}$-sized alphabet $\Sigma$ accepts a given length-$n$ string $x$ requires time $n^{3-o(1)}$ in the worst case.

In the remainder of the section, we discuss several equivalent formulations of the NFA Acceptance hypothesis, using a framework of graph problems that we refer to as Colored Walk problems.

\subsection{Colored Walk Framework}

\label{sec:CWvariants}

We study the following variants of the Colored Walk problem.

\defproblem{Directed/Undirected Node-$C(n)$-Colored Walk}{directed/undirected graph $G=(V,E)$ with $n = |V|$ and $m = |E|$, vertices $s,t \in V$, \\ coloring $c \colon V \to \{1,\ldots,C(n)\}$, integer $\ell$, color sequence $c_1,\ldots,c_\ell$}{Exist $s = v_0,v_1,\ldots,v_\ell = t$ with $(v_{i-1},v_i) \in E$ and $c(v_i) = c_i$ for all $1 \le i \le \ell$?}

\vspace{-0.5cm}

\defproblem{Directed/Undirected Edge-$C(n)$-Colored Walk}{directed/undirected graph $G=(V,E)$ with $n = |V|$ and $m = |E|$, vertices $s,t \in V$, \\ coloring $c \colon E \to \{1,\ldots,C(n)\}$, integer $\ell$, color sequence $c_1,\ldots,c_\ell$}{Exist $s = v_0,v_1,\ldots,v_\ell = t$ with $(v_{i-1},v_i) \in E$ and $c(v_{i-1},v_i) = c_i$ for all $i$?}

Note that all of these problems can be solved in time $O(m\ell)$.
We show that most of these problem variants are equivalent, in the following sense.
Note that statement A1 below is exactly the NFA Acceptance hypothesis.

\RestateInit{\restateEquivalences}
\begin{restatable}[Colored Walk Framework]{lemma}{equivalences}
  \label{lem:equivalencehypo}\RestateRemark{\restateEquivalences}
	Let $\alpha \in [1,2], \beta > 0$. All of the following statements A1, A2, A3, A4, A5, B1, B2, B3, and B4 are equivalent:
	\begin{enumerate}
		\item[A] Restricted to instances with $m = \Theta(n^\alpha)$ and $\ell = \Theta(n^\beta)$, there is no $O(n^{\alpha+\beta-\eps})$-time algorithm for any $\eps>0$ for the problem...
		\begin{enumerate}[1]
			\item ... NFA Acceptance with alphabet size $n^{o(1)}$.
			\item ... Directed Node-2-Colored Walk.
			\item ... Directed Node-$n$-Colored Walk.
			\item ... Directed Edge-2-Colored Walk.
			\item ... Directed Edge-$n^{o(1)}$-Colored Walk.
		\end{enumerate}
		\item[B] Restricted to instances with $m = \Theta(n^\alpha)$ and $\ell = O(n^\beta)$, there is no $O(n^{\alpha+\beta-\eps})$-time algorithm for any $\eps>0$ for the problem...
		\begin{enumerate}[1]
			\item ... Undirected Node-2-Colored Walk.
			\item ... Undirected Node-$n$-Colored Walk.
			\item ... Undirected Edge-2-Colored Walk.
			\item ... Undirected Edge-$n^{o(1)}$-Colored Walk.
		\end{enumerate}
	\end{enumerate}
\end{restatable}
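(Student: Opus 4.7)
The plan is to establish all nine statements as equivalent by exhibiting a web of reductions that preserve the target running-time exponent $\alpha + \beta$ (i.e., the product $m\ell$) up to $n^{o(1)}$ factors, possibly at the cost of redistributing this exponent between the density and the length. As in \lemref{dense-is-hardest}, padding with isolated vertices then allows hardness to be moved between $(\alpha,\beta)$-regimes, so it suffices that the reductions land \emph{somewhere} in the appropriate parameter range.

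Phase~1 handles the directed cluster A1--A5. The routine containments give A4 $\Rightarrow$ A5 (two colors embed into $n^{o(1)}$ colors), A5 $\Rightarrow$ A1 (a colored walk is a restricted NFA), and A2 $\Rightarrow$ A3. The substantive reductions I would use are:
\begin{itemize}
  \item \emph{A1 $\Rightarrow$ A5.} Subdivide each NFA transition $(q,a,q')$ by a fresh intermediate vertex labeled by a new padding color, and color the remaining edge by $a$; this eliminates self-loops and parallel transitions with different labels, yielding a simple edge-colored directed graph, and at most doubles $\ell$.
  \item \emph{A5 $\Rightarrow$ A4 and A3 $\Rightarrow$ A2.} Binary-encode each color via a length-$O(\log |\Sigma|) = n^{o(1)}$ path gadget whose edges (resp.\ nodes) use only two colors, and expand the query sequence accordingly.
  \item \emph{Edge $\leftrightarrow$ Node colorings.} For edge $\Rightarrow$ node, subdivide each edge $(u,v)$ by a new node colored $c(u,v)$, assign every original vertex a fresh filler color, and interleave the filler into the query sequence. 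For node $\Rightarrow$ edge, replace each vertex $v$ with an internal edge $v_{\mathrm{in}} \to v_{\mathrm{out}}$ colored $c(v)$ and turn each original $(u,v)$ into an edge $u_{\mathrm{out}} \to v_{\mathrm{in}}$ with a fresh filler color.
\end{itemize}
Phase~2 repeats the same gadgets on undirected graphs to obtain B1 $\equiv$ B2 $\equiv$ B3 $\equiv$ B4.

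The heart of the argument is Phase~3, tying the two clusters together. For B $\Rightarrow$ A, an undirected instance is a symmetric directed instance by orienting each edge in both directions; if the walk length falls below $\Theta(n^\beta)$, we pad by appending a uniquely-colored dead-end path to inflate $\ell$ into the $\Theta$-regime demanded by statement~A. The main obstacle is the reverse direction A $\Rightarrow$ B, which requires simulating directed walks inside an undirected graph. I would reduce A3 to B2 by assigning each original vertex its \emph{own} unique color (using the $n$ colors allowed by A3/B2) and, for each directed edge $(u,v)$, introducing two fresh vertices $a_{uv},b_{uv}$ carrying two shared new colors $A,B$ and connected by the undirected path $u - a_{uv} - b_{uv} - v$. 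The query sequence becomes
\[
c(u_0),\ A,\ B,\ c(u_1),\ A,\ B,\ c(u_2),\ \ldots,\ A,\ B,\ c(u_\ell).
\]
Uniqueness of the original colors forces the walk, at each stage, to reach exactly the next prescribed original vertex, which is only possible by traversing the gadget of an actual outgoing edge; the $A,B$ pattern rules out oscillation inside a single gadget. This produces an undirected instance with $n' = \Theta(n+m) = \Theta(n^\alpha)$, $m' = \Theta(m) = \Theta((n')^1)$, $\ell' = \Theta(\ell) = \Theta((n')^{\beta/\alpha})$, landing on the $(\alpha',\beta') = (1,\beta/\alpha)$ slice of statement~B with $(n')^{\alpha' + \beta'} = n^{\alpha+\beta}$, as required.

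The step I expect to be trickiest is not any individual gadget but the parameter bookkeeping: each reduction shifts $(\alpha,\beta)$, and one must verify — in tandem with the padding argument of \lemref{dense-is-hardest} — that the reductions collectively cover every $(\alpha,\beta)$ regime of the lemma rather than a single density. Once this bookkeeping is in place, the combined cycle of reductions from Phases~1--3 shows that a fast algorithm for any of the nine problems would yield a fast algorithm for all of them, and the lemma follows.
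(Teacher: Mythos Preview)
Your directed\,$\to$\,undirected step (Phase~3, A3~$\Rightarrow$~B2) does not go through as written. If you ``assign each original vertex its own unique color'' in the B2 instance, then the new query sequence $c(u_0),A,B,c(u_1),A,B,\ldots$ would have to name the vertices $u_0,\ldots,u_\ell$ of the sought walk---but those are precisely what you are trying to decide, so you cannot write down the query. If instead you keep the \emph{given} A3 colors on original vertices (which need not be distinct), then your uniqueness argument collapses: after reading the pattern $A,B$ and arriving at some $b_{uv}$, the next required color $c_i$ may match several $v$'s, and nothing in the gadget prevents the walk from backtracking through a different edge-gadget incident to $u$ (all $a_{uw}$ share color $A$). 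The paper handles exactly this difficulty with a per-vertex path gadget of fixed length~$P=4$ whose $2$-color pattern $(2,c(v),c(v),1,2)$ (edge case) or $(2,c(v),2,2,1)$ (node case) is engineered so that \emph{any} walk starting at $v_{\mathrm{in}}$ and matching the pattern is forced to be the forward path to $v_{\mathrm{out}}$; this is the genuine combinatorial content you are missing.

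The second gap is the one you yourself flag as ``trickiest'' but do not close: several of your reductions (the transition-subdivision for A1~$\Rightarrow$~A5, the edge-subdivision for Edge~$\Rightarrow$~Node, and the A~$\Rightarrow$~B gadget) create $\Theta(m)$ new vertices, so a dense instance with $m=\Theta(n^\alpha)$ becomes a \emph{sparse} instance with $n'=\Theta(n^\alpha)$, $m'=\Theta(n')$, landing at $(\alpha',\beta')=(1,\beta/\alpha)$. Your appeal to \lemref{dense-is-hardest} cannot repair this: that padding argument (add isolated vertices) transfers hardness only from dense to sparse, whereas here you would need the reverse direction, and there is no way to pad a sparse hard instance up to density $n^\alpha$ without either adding spurious walks or leaving the running-time budget. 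The paper avoids this entirely by making every reduction $\{n,m,\ell\}$-preserving up to $n^{o(1)}$ factors---using gadgets of size $O(1)$ or $|\Sigma|=n^{o(1)}$ per \emph{vertex}, never per edge---so that each reduction stays in the same $(\alpha,\beta)$ regime and the equivalence at every fixed $(\alpha,\beta)$ follows directly.
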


The proof is deferred to Section~\ref{sec:equivalences}.

From now on, we use the term \emph{Colored Walk} for any of the above problems. In particular, in the reductions based on the NFA Acceptance hypothesis, we can always use the variant that is easiest to work with. In particular, all of our reductions in the following sections will start from Directed Edge-2-Colored Walk (i.e., we will use statement A4). By the above lemma, we thus obtain conditional lower bounds under the NFA Acceptance hypothesis.

We leave it as an open problem whether Directed/Undirected Edge-$C(n)$-Colored Walk for~$C(n) \gg n^{o(1)}$ is also equivalent to the above problems.

In Section~\ref{sec:equivalences}, we also show equivalence of Colored Walk to a version without source and target nodes $s,t$, specifically, the version in which we determine existence of \emph{any} walk in the graph with a given color sequence $c_1,\dots, c_\ell$. Such versions occur e.g., for string matching in labeled graphs or regular path queries.

\section{Hardness of CFL Reachability}
\label{sec:cfl-reach}

In this section, we give a tight hardness result for CFL reachability under the NFA acceptance hypothesis.

A \emph{context-free grammar} $\Gamma$ is a tuple $(N, \Sigma, P, S)$, where $N$ is a set of nonterminals, $\Sigma$ is a set of terminals disjoint from $N$, $S\in N$ is the start symbol and $P$ is a set of production rules of the form $X \to \alpha$ with $X\in N$ and $\alpha \in (N\cup \Sigma)^*$. For a string $x\in (N\cup \Sigma)^*$, replacing an occurrence of a nonterminal $X\in N$ by $\alpha$ (where $X\to \alpha$ is in $P$) is called an application of the production rule $X\to \alpha$. We say that a string $x \in \Sigma^*$ is generated by $\Gamma$ if $x$ can be obtained from $S$ by (repeated) applications of production rules. We let $L(\Gamma)$ denote the set of strings generated by~$\Gamma$.  

Recall that in the CFL Reachability problem we are given a context-free grammar $\Gamma$ describing a language $L(\Gamma)$ over a terminal set $\Sigma$ as well as a directed graph $G = (V,E)$ with designated vertices $s,t \in V$ where every edge $e \in E$ is labeled by a terminal $\sigma(e) \in \Sigma$. We call any sequence of terminals $w \in \Sigma^*$ a \emph{word}, and we say that a walk $v_0,v_1,\ldots,v_t$ in $G$ \emph{spells the word} $\sigma(v_0,v_1) \sigma(v_1,v_2) \ldots \sigma(v_{\ell-1},v_\ell)$, i.e., we concatenate all edge labels along the walk.
The task is to decide whether there is a walk from $s$ to $t$ in $G$ spelling a word in $L(\Gamma)$, i.e., whether there is a walk $s = v_0,v_1\ldots,v_\ell = t$ such that $\sigma(v_0,v_1) \sigma(v_1,v_2) \ldots \sigma(v_{\ell-1},v_\ell) \in L(\Gamma)$.
We write $n = |V|, m = |E|$ and we assume that $\Gamma$ is fixed, in particular it has constant size and any running time dependence on the size of $\Gamma$ can be ignored. 

The CFL Reachability problem has a classic algorithm running in time $O(n^3)$~\cite{Yannakakis90}, which has been slightly improved to time $O(n^3 / \log n)$~\cite{Chaudhuri08}.
For \emph{combinatorial} algorithms a conditional lower bound ruling out time $O(n^{3-\eps})$ for any $\eps > 0$ was shown in~\cite{ChatterjeeCP18}.

We prove a tight lower bound for CFL Reachability under the NFA Acceptance hypothesis:

\begin{theorem} \label{thm:cflreach}
There is a fixed grammar $\Gamma$ such that the CFL Reachability problem on $\Gamma$ has no $O(n^{3-\eps})$-time algorithm for any $\eps>0$ assuming the NFA Acceptance hypothesis.
\end{theorem}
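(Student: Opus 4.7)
I would reduce from Directed Edge-$2$-Colored Walk at parameters $\alpha=2$ and $\beta=1$ (i.e., $n$ vertices, $m=\Theta(n^2)$ edges, color sequence $c_1,\dots,c_\ell$ of length $\ell=\Theta(n)$), which requires time $n^{3-o(1)}$ under the NFA Acceptance hypothesis by statement A4 of \lemref{equivalencehypo}. The plan is to produce a CFL Reachability instance on $N=O(n)$ vertices over a fixed grammar $\Gamma$, so that any $O(N^{3-\eps})$-time algorithm for CFL Reachability would yield an $O(n^{3-\eps})$-time algorithm for the source problem, contradicting the hypothesis.

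The fixed grammar I propose is a two-bracket Dyck language with a central marker: over the constant-size terminal alphabet $\Sigma=\{[_1,[_2,]_1,]_2,\#\}$, take
\[ S \to [_1\, S\, ]_1 \;\mid\; [_2\, S\, ]_2 \;\mid\; \#, \]
so that $L(\Gamma)=\{[_{b_1}\cdots[_{b_k}\,\#\,]_{b_k}\cdots]_{b_1} : k\ge 0,\; b_i\in\{1,2\}\}$. From a colored-walk instance $(G,s,t,c,\ell,c_1,\dots,c_\ell)$ I would build $G'$ by (i) taking a copy of $G$ and relabeling every edge of color $a\in\{1,2\}$ with the terminal $[_a$, (ii) adding a fresh chain of $\ell+1$ vertices $u_0,u_1,\dots,u_\ell$ with the edge $(u_{i-1},u_i)$ labeled $]_{c_{\ell-i+1}}$ (so the chain spells $]_{c_\ell}]_{c_{\ell-1}}\cdots]_{c_1}$), and (iii) inserting a single bridge edge $(t,u_0)$ labeled $\#$. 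The designated CFL Reachability endpoints are $s$ and $u_\ell$.

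For correctness, any walk from $s$ to $u_\ell$ in $G'$ must use the unique bridge edge exactly once, since that edge is the only way to leave the vertex set of $G$ and there are no edges back, nor any branching inside the chain. Hence every such walk decomposes as: a walk of some length $k$ from $s$ to $t$ in $G$ spelling $[_{a_1}\cdots[_{a_k}$, followed by $\#$, followed by the complete chain spelling $]_{c_\ell}\cdots]_{c_1}$. Membership of the resulting word in $L(\Gamma)$ then forces both $k=\ell$ and the pointwise matching $a_i=c_i$ — which is exactly a valid colored walk — and conversely any valid colored walk yields such an accepting walk in $G'$. The total vertex count is $|V(G)|+\ell+1=O(n)$, establishing the desired cubic lower bound.

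The main place I would worry about a subtle bug is whether the Dyck matching can be ``shifted'': the rigidity of the chain (length exactly $\ell$) and the mandatory single use of the bridge together pin the number of open brackets to exactly $\ell$, and the last-opened/first-closed pairing of the grammar then propagates the required equalities $a_i=c_i$ all the way down. The other thing to verify is that the reduction is robust to walks in $G$ that pass through $t$ multiple times before crossing the bridge: such walks simply generate longer open-bracket prefixes, which the grammar rejects because no corresponding closes exist. Everything else is routine bookkeeping about vertex and edge counts.
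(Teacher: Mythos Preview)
Your proposal is correct and essentially identical to the paper's proof: both reduce from Directed Edge-2-Colored Walk at $\alpha=2,\beta=1$, relabel the edges of $G$ with opening brackets, and append a directed chain of length $\ell$ carrying the closing brackets $]_{c_\ell}\cdots]_{c_1}$, with correctness following from the forced decomposition and the Dyck matching. The only cosmetic differences are that the paper uses the standard Dyck-2 grammar (with $S\to SS$ and base cases $(_a)_a$) rather than your marker variant, and identifies $t$ with $u_0$ directly instead of adding a $\#$-labeled bridge edge; your marker makes the single-crossing argument slightly more explicit, but the paper's version works for the same structural reason (all opens precede all closes on any $s$--$u_\ell$ walk).
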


\begin{proof}
We reduce from Directed Edge-2-Colored Walk. To this end, we are given a directed graph $G=(V, E)$ with colors $c \colon E \to \{1,2\}$, designated vertices $s,t\in V$, and a color sequence $c_1, \dots, c_\ell \in \{1,2\}$. Let $n = |V|$ and $m = |E|$.
We assume the graph to be dense (i.e., $m = \Theta(n^2)$) and we assume $\ell = \Theta(n)$. 
By the NFA Acceptance hypothesis in the setting $\alpha=2, \beta=1$, there is no algorithm solving all such instances of Directed Edge-2-Colored Walk in time $O(n^{3-\eps})$ for any~$\eps > 0$.

We construct an instance of CFL Reachability as follows. The context-free grammar $\Gamma$ is the language of well-formed expressions on two types of parenthesis (also known as Dyck-2) given by the nonterminal $S$, the terminals $(_1,)_1,(_2,)_2$, and the production rules $S \to SS, S \to (_1S)_1, S \to (_1)_1, S \to (_2S)_2, S \to (_2)_2$. Note that the grammar is independent of the input size, i.e., $\Gamma$ has constant size.\footnote{This special case of the CFL Reachability problem, where $\Gamma$ is Dyck-2, is also known as Dyck-2 Reachability.}

We construct a directed graph $G' = (V',E')$ by starting from the graph $G$, adding a directed path of length $\ell$ on nodes $u_0,u_1,\ldots,u_\ell$, and identifying the nodes $t$ and $u_0$. In other words, we attach a path $u_1,\ldots,u_\ell$ to the node $t$, and we use $u_0$ as another name for node $t$. We set $s' := s$ and $t' := u_\ell$. For each edge $(u,v) \in E$ we set the label to $\sigma(e) := (_{c(u,v)}$. For each $1 \le i \le \ell$ we set the label $\sigma(u_{i-1},u_i) := \; )_{c_{\ell+1-i}}$. This finishes the construction of the CFL Reachability instance $(\Gamma,G',s',t',\sigma)$.

Note that any walk from $s'$ to $t'$ in $G'$ ends with the path $u_0,u_1,\ldots,u_\ell$ and thus with the labels $)_{c_\ell} )_{c_{\ell-1}} \ldots )_{c_2} )_{c_1}$. Since there are no other edge labels with closing brackets, we must choose a walk from $s'=s$ to $u_0=t$ spelling the word $(_{c_1} (_{c_2} \ldots (_{c_{\ell-1}} (_{c_\ell}$ in order to match all parentheses. Such a walk corresponds to a walk from $s$ to $t$ in $G$ with color sequence $c_1,c_2,\ldots,c_\ell$. This shows that the constructed CFL Reachability instance is a YES-instance if and only if the given Colored Walk instance is a YES-instance, and thus shows correctness of the reduction.

Note that the constructed graph $G'$ consists of $n+\ell = O(n)$ nodes (using our assumption on~$\ell$). Therefore, any algorithm for CFL Reachability running in time $O(n^{3-\eps})$ solves the given Colored Walk instance in time $O(n^{3-\eps})$. This contradicts the NFA Acceptance hypothesis (as discussed in the first paragraph).
\end{proof}

\section{Hardness of Word Break}
\label{sec:wordbreak}

Recall that in the Word Break problem, we are given a string $S$ and a dictionary $D$ (i.e., $D$ is a set of strings) and we ask whether $S$ can be split into dictionary words, i.e., whether we can partition $S$ into substrings such that each substring is in the set $D$. We denote the length of $S$ by $n$ and the total length of all strings in $D$ by $m$. In this section, we prove that the NFA Acceptance hypothesis implies optimality (up to subpolynomial factors) of the $O(n (m \log m)^{1/3} + m)$ time algorithm given in~\cite{BringmannGL17}.

\begin{theorem} \label{thm:wordbreak}
The Word Break problem has no $O(n m^{1/3 - \eps} + m)$-time algorithm for any $\eps > 0$ assuming the NFA Acceptance hypothesis. This even holds restricted to $m = \Theta(n^{\gamma})$ for any constant $\gamma \in (0,3/2)$.
\end{theorem}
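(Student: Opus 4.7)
The plan is to reduce from Directed Edge-$2$-Colored Walk (which is equivalent to the NFA Acceptance hypothesis by \lemref{equivalencehypo}(A4)). For a target exponent $\gamma \in (0, 3/2)$, set $\beta := 3/\gamma - 1 > 1$ and consider Colored Walk instances with $N$ vertices, $|E| = \Theta(N^2)$ color-$\{1,2\}$ edges, and color-sequence length $\ell = \Theta(N^\beta)$; the hypothesis (at $\alpha = 2$ and this $\beta$) forbids $N^{2 + \beta - o(1)}$-time algorithms for such instances. The reduction will produce a Word Break instance with string length $n = \widetilde\Theta(N^{1+\beta})$ and dictionary size $m = \widetilde\Theta(N^3)$, so that $m = \Theta(n^\gamma)$ holds up to polylogarithmic factors (which a final padding step with a dummy character can absorb). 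An $O(n m^{1/3 - \eps} + m)$-time Word Break algorithm would then solve Colored Walk in time $\tOh(\max(N^{2 + \beta - 3\eps}, N^3))$, which for any $\eps > 0$ is $o(N^{2+\beta})$ (using $\beta > 1$), contradicting the hypothesis.

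The construction I have in mind uses the $4$-letter alphabet $\{0, 1, \#_1, \#_2\}$. With $V = \{1, \dots, N\}$ and $\bin(v)$ the $\lceil \log N \rceil$-bit binary encoding of $v$, define $B^{(c)} := \#_c \cdot \bin(1)\, \bin(2) \cdots \bin(N)$ for $c \in \{1, 2\}$, and set
\[
    S := \bin(1)\, \bin(2) \cdots \bin(N) \cdot B^{(c_1)} \cdot B^{(c_2)} \cdots B^{(c_\ell)}.
\]
The dictionary contains three kinds of words: an \emph{init word} $\bin(1) \cdots \bin(s)$; an \emph{end word} $\bin(t+1) \cdots \bin(N)$; and, for every edge $(u, v) \in E$ of color $c$, an \emph{edge word} $W_{u,v,c} := \bin(u+1) \cdots \bin(N) \cdot \#_c \cdot \bin(1) \cdots \bin(v)$, of length $O(N \log N)$. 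Hence $m = \tOh(|E| \cdot N) = \tOh(N^3)$ and $n = |S| = \tOh(\ell N)$, as required.

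Correctness reduces to showing that the legal Word Break splittings of $S$ are in bijection with walks $s = v_0, v_1, \ldots, v_\ell = t$ in $G$ having color sequence $c_1, \ldots, c_\ell$. The key facts I would verify are: each edge word contains exactly one $\#$-symbol while $S$ contains exactly $\ell$ of them, so every legal splitting uses exactly $\ell$ edge words, one per block $B^{(c_i)}$, with color forced to $c_i$; at the boundaries, only the init word starts with $\bin(1)$ (edge words begin with $\bin(u+1)$ for $u \ge 1$, or with $\#_c$ if $u = N$) and only the end word matches the $\#$-free suffix of $S$, so both are forced and pin the walk endpoints to $s$ and $t$; and consecutive edge words $W_{v_{i-1}, v_i, c_i}$ and $W_{v_i, v_{i+1}, c_{i+1}}$ can only align at the position ``end of $\bin(v_i)$ within block $i$'', which enforces walk adjacency. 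The main obstacle I foresee is ruling out spurious splittings: the single-$\#$ shape of each edge word is what rigidly anchors it to one block-step of the color sequence, while the $\bin(1) \cdots \bin(N)$ enumeration inside each block translates the nondeterministic choice of where to split into the nondeterministic choice of the next walk vertex. Small corner cases such as $s = N$ or $t = N$ (which would make init or end empty) can be handled by adding a sentinel vertex.
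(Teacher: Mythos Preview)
Your high-level plan matches the paper's exactly: reduce from Directed Edge-2-Colored Walk at $\alpha=2$, $\beta = 3/\gamma - 1$, producing a Word Break instance with string length $\tOh(N^{1+\beta})$ and dictionary size $\tOh(N^3)$, and then read off the contradiction from $\beta > 1$. The paper, however, uses a much simpler \emph{unary} encoding over the alphabet $\{0,1,2\}$:
\[
S := 0^{s}\,c_1\,0^{N}\,c_2\,0^{N}\,c_3\cdots 0^{N}\,c_\ell\,0^{N-t}, \qquad D := \{\,0^{u}\,c(u,v)\,0^{N-v}\mid(u,v)\in E\,\}.
\]
Each dictionary word contains exactly one non-zero symbol, so any splitting uses exactly $\ell$ words and the correspondence with walks is immediate; there are no separate init/end words and no alignment analysis is needed.

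Your binary construction, as stated, has a genuine correctness gap. The end word $\bin(t{+}1)\cdots\bin(N)$ can be applied in a \emph{non-final} block whenever the current walk vertex is $t$: it takes you to ``end of $\bin(N)$'' just before the next $\#$, and from there an edge word $W_{N,v,c}$ (which, as you note, begins with $\#_c$ when $u=N$) continues into the next block. This effectively adds a free ``teleport from $t$ to $N$'' to the graph, so the Word Break instance can be a YES-instance while Colored Walk is a NO-instance. Concretely, take $N=3$, $s=1$, $t=2$, edges $(1,2)$ of color $1$ and $(3,1)$ of color $2$, and color sequence $1,2,1$: there is no colored $s$--$t$ walk of length $3$, but $S$ splits as $\mathrm{init}\,|\,W_{1,2,1}\,|\,\mathrm{end}\,|\,W_{3,1,2}\,|\,W_{1,2,1}\,|\,\mathrm{end}$. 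Your sentinel remark targets only the case where init/end would be empty, not this issue. The fix is easy---add an isolated vertex as the new last index so that no edge word starts with $\#_c$---but it needs to be stated and justified for this reason. Once patched, your argument goes through; still, the paper's unary encoding sidesteps all of this.
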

Note that the $O(n (m \log m)^{1/3} + m)$ algorithm from~\cite{BringmannGL17} solves Word Break in time $\tOh(m)$ whenever $m = \Omega(n^{3/2})$. Since this is near-linear, there is no need for proving a fine-grained lower bound in this case. Therefore, it is natural that we assume $\gamma < 3/2$ in the above theorem.

\begin{proof}[Proof of Theorem~\ref{thm:wordbreak}]
In the following proof, we use the upper case letters $N$ and $M$ for the parameters of Word Break in order to differentiate them from the parameters of Colored Walk.

We reduce from Directed Edge-2-Colored Walk. To this end, we are given a directed graph $G=(V, E)$ with colors $c \colon E \to \{1,2\}$, designated vertices $s,t\in V$, and a color sequence $c_1, \dots, c_\ell \in \{1,2\}$. We write $V = \{1,\dots, n\}$ and let $m = |E|$.
We assume the graph to be dense (i.e., $m = \Theta(n^2)$) and we assume $\ell = \Theta(n^\beta)$ for $\beta := 3/\gamma - 1$. 
By the NFA Acceptance hypothesis in the setting $\alpha=2, \beta = 3/\gamma - 1$, there is no algorithm solving all such instances of Directed Edge-2-Colored Walk in time $O(n^{2+\beta-\eps})$ for any $\eps > 0$.


We use the notation $0^k$ to denote the string of length $k$ consisting of $k$ times the letter 0.
We construct the string $S$ and the dictionary $D$ over alphabet $\{0,1,2\}$ as follows:
\begin{align*}
S &:= 0^{s} \, c_1 \, 0^{n} \, c_2 \, 0^{n} \, c_3 \, \ldots \, 0^{n} \, c_{\ell-1} \, 0^{n} \, c_\ell \, 0^{n-t}. \\
D &:= \{ \, 0^{u} \, c(u,v) \, 0^{n-v} \, \mid \, (u,v) \in E \}. 
\end{align*}

\paragraph{Correctness} We claim that the string $S$ can be split into dictionary words if and only if there is a walk from $s$ to $t$ with color sequence $c_1,\ldots,c_\ell$ in $G$. This is straightforward to show: To match the prefix $0^s c_1$ we must choose a dictionary word corresponding to an edge $(s,v_1)$ of color $c_1$, and the remaining string starts with the prefix $0^{v_1} c_2$. Generally, in the $i$th step the remaining string is of the form $0^{v_i} c_i 0^n c_{i+1} \ldots 0^n c_\ell 0^{n-t}$, so in order to match the prefix $0^{v_i} c_i$ we must choose a dictionary word corresponding to an edge $(v_i,v_{i+1})$ of color $c_i$. The last vertex $v_\ell$ must satisfy $v_\ell = t$ in order to match the suffix $0^{n-t}$. Hence, any valid partitioning of the string $S$ into dictionary words corresponds to a walk $s = v_0,v_1,\ldots,v_\ell = t$ with color sequence $c_1,\ldots,c_\ell$ in $G$, and this is an equivalence.

\paragraph{Running Time} The length of the string $S$ is $N = \Theta(\ell n) = \Theta(n^{1+\beta})$, by the assumption on $\ell$.
The total length of all strings in $D$ is $M = \Theta(nm) = \Theta(n^3)$, by the assumption that $G$ is dense. Note that $M = \Theta(N^{3/(1+\beta)}) = \Theta(N^{\gamma})$ by our setting of $\beta = 3/\gamma - 1$, so we constructed instances with the desired setting of parameters.
If Word Break can be solved in time $O(N M^{1/3 - \eps} + M)$, then by plugging in the bounds on $N$ and $M$, our setting of Directed Edge-2-Colored Walk can be solved in time $O(n^{1+\beta} (n^3)^{1/3-\eps} + n^3) = O(n^{2+\beta-3\eps} + n^3)$. Since $\gamma < 3/2$ we have $\beta = 3/\gamma - 1 > 1$ and thus by setting $\eps' := \min\{3\eps, 2+\beta-3\} > 0$ we can bound the running time by $O(n^{2+\beta-\eps'})$. This contradicts the NFA Acceptance hypothesis in the setting $\alpha=2, \beta = 3/\gamma - 1$.
\end{proof}

\section{Hardness of OMv}
\label{sec:omv}

In the Online Boolean Matrix-Vector Multiplication (OMv) problem, an algorithm is initially given an $n \times n$ Boolean matrix $M$. Then the following repeats for $n$ rounds: In the $i$th round, the algorithm is given an $n$-dimensional Boolean vector $v_i$ and has to compute $M v_i$. The algorithm must compute the output $M v_i$ before it can proceed to the next round. The running time of an OMv algorithm is the total running time for the initialization and all $n$ rounds together. 

By naively multiplying $M v_i$ in time $O(n^2)$ in each of the $n$ rounds, OMv can be solved in time $O(n^3)$. This running time has been improved to $n^3 / 2^{\Omega(\sqrt{\log n})}$~\cite{LarsenW17}. The OMv Hypothesis, due to Henzinger et al.~\cite{HenzingerKNS15}, postulates that OMv cannot be solved in strongly subcubic time.

\begin{hypothesis}[OMv Hypothesis] \label{conj:omv}
For any constant $\eps > 0$, OMv has no $O(n^{3-\eps})$-time algorithm (with an error probability of at most 1/3).
\end{hypothesis}

Many data structure problems and dynamic problems have matching lower bounds under the OMv Hypothesis, as its usefulness has been established in various areas such as graph algorithms \cite{HenzingerKNS15,Dahlgaard16,AbboudD16,HenzingerW21}, string algorithms~\cite{CliffordGLS18,KempaK22,CliffordGK0U22}, computational geometry~\cite{DallantI22,LauR21}, linear algebra~\cite{JiangPW23}, formal languages~\cite{GibneyT21}, and database theory~\cite{BerkholzKS17,CaselS23,KaraNOZ23}.

We show the following relation to the NFA Acceptance hypothesis:

\begin{theorem} \label{thm:nfa_omv}
The NFA Acceptance hypothesis implies the OMv Hypothesis.
\end{theorem}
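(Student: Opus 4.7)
The plan is to prove the contrapositive: an $O(N^{3-\eps})$ OMv algorithm yields a subcubic algorithm for NFA Acceptance in the core setting $\alpha=2,\beta=1$, contradicting the hypothesis. By \lemref{equivalencehypo} (statement A4), it suffices to solve Directed Edge-$2$-Colored Walk on instances with $n$ vertices, $m = \Theta(n^2)$ edges, and color sequence length $\ell = \Theta(n)$ in time $O(n^{3-\eps'})$ for some $\eps' > 0$. The conceptual idea is that a $2$-colored walk problem is driven by only two distinct $n \times n$ transition matrices $M_1, M_2$, so we can concatenate them into a single fixed matrix and funnel all updates through that matrix, turning the problem into a pure online matrix--vector product.

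Concretely, given an instance $(G=(V,E), c\colon E\to\{1,2\}, s, t, c_1,\ldots,c_\ell)$, I define $M_r \in \{0,1\}^{n\times n}$ by $M_r[v,u] = 1$ iff $(u,v) \in E$ and $c(u,v) = r$, and form the block-diagonal matrix $\tilde M = \operatorname{diag}(M_1, M_2) \in \{0,1\}^{2n \times 2n}$, which I feed to the OMv preprocessing phase. I then maintain a state vector $v_i \in \{0,1\}^n$, starting from $v_0 = \mathbf{1}_s$, with the invariant that $v_i[u] = 1$ iff there is a walk from $s$ to $u$ spelling $c_1 \ldots c_i$. In round $i$ I submit the OMv query $q_i \in \{0,1\}^{2n}$ obtained by placing $v_{i-1}$ in the $c_i$-th block of size $n$ and zero in the other block; by the block-diagonal structure of $\tilde M$, the $c_i$-th block of $\tilde M q_i$ equals $M_{c_i} v_{i-1} = v_i$, which preserves the invariant. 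I pad the remaining $2n - \ell = \Theta(n)$ rounds with zero queries (ignoring the outputs), and report YES iff $v_\ell[t] = 1$.

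For the running time, each round spends $O(n)$ time assembling the query and extracting a block from the answer, contributing $O(n^2)$ total overhead, which is dominated by the hypothetical $O((2n)^{3-\eps}) = O(n^{3-\eps})$ cost of OMv. The one subtlety I anticipate is the randomness of OMv: a single invocation returns correct answers in all $2n$ rounds only with probability $\ge 2/3$, so I would run $O(\log n)$ independent copies in parallel and take the majority of the final YES/NO verdict $v_\ell[t]$, amplifying the overall error below any constant while preserving subcubic time.

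I do not expect any real technical obstacle. The heart of the argument is the observation that a bounded-alphabet NFA's transition dynamics can be embedded into a single fixed matrix via the block-diagonal trick, after which each step of the textbook dynamic program is exactly one OMv query; the reason this reduction relies on Colored Walk rather than generic NFA Acceptance is purely cosmetic, letting us keep the alphabet (and hence the block count) constant so that the dimension blow-up is only a factor of $2$.
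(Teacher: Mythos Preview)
Your proposal is correct and follows essentially the same approach as the paper: reduce from Directed Edge-2-Colored Walk in the $\alpha=2,\beta=1$ regime, encode the two transposed color-adjacency matrices, and simulate each step $v_i = M_{c_i} v_{i-1}$ as one OMv query, with the block-diagonal packaging into a single $2n\times 2n$ matrix (the paper first presents a two-instance version and then gives exactly your block-diagonal construction as a remark to make the reduction many-one). Your treatment of the randomized error via $O(\log n)$ independent repetitions is a minor extra detail the paper omits, but otherwise the arguments coincide.
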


In particular, all implications of the OMv Hypothesis shown in~\cite{HenzingerKNS15,Dahlgaard16,AbboudD16,
HenzingerW21,CliffordGLS18,KempaK22,CliffordGK0U22,
DallantI22,LauR21,JiangPW23,GibneyT21,
BerkholzKS17,CaselS23,KaraNOZ23} also hold under the NFA Acceptance hypothesis.

\begin{proof}[Proof of Theorem~\ref{thm:nfa_omv}]
We reduce from Directed Edge-2-Colored Walk. To this end, we are given a directed graph $G=(V, E)$ with colors $c \colon E \to \{1,2\}$, designated vertices $s,t\in V$, and a color sequence $c_1, \dots, c_\ell \in \{1,2\}$. 
We assume the graph to be dense ($m = \Theta(n^2)$) and we assume $\ell = \Theta(n)$. 
By the NFA Acceptance hypothesis in the setting $\alpha=2, \beta=1$, there is no algorithm solving all such instances of Directed Edge-2-Colored Walk in time $O(n^{3-\eps})$ for any $\eps > 0$.

Let $N := \max\{n,\ell\}$. If $N>n$ we add $N-n$ isolated dummy vertices to $G$.

Consider for each color $c\in \{1,2\}$ the \emph{transposed} adjacency matrix $M^{(c)}$ corresponding to the edges with color $c$, i.e., $M^{(c)} \in \{0,1\}^{N \times N}$ where $M^{(c)}_{u,v} = 1$ if and only if $(v,u) \in E$ and $c(v,u) = c$. 
Let $u_0 \in \{0,1\}^N$ be the indicator vector for $s$, i.e., $u_0[v] = 1$ if and only if $v=s$.

Suppose there is an algorithm $\algo$ solving OMv in time $\Oh(n^{3-\varepsilon})$ for some $\varepsilon > 0$. We use $\algo$ to preprocess $M^{(1)}$ and $M^{(2)}$ (as independent OMv instances). For each $i=1, \dots, \ell$, we compute $u_i := M^{(c_i)} u_{i-1}$ using algorithm $\algo$ on the corresponding OMv instance. Finally, we accept the colored walk instance if and only if $u_\ell[t] = 1$.    

Inductively, it is straightforward to show that $u_i[v] = 1$ if and only if there is a walk from $s$ to $v$ with color sequence $c_1,\dots,c_i$; hence correctness follows. 
Note that for each of the two OMv instances we execute at most $\ell \le N$ rounds (and we can add dummy rounds to obtain exactly $N$ rounds). Therefore, algorithm $\algo$ solves both instances in total time $\Oh(N^{3-\varepsilon})$ over all rounds. Since $N = \max\{n,\ell\} = O(n)$ by our assumption on $\ell$, it follows that we can solve the given Directed Edge-2-Colored Walk instance in time $\Oh(n^{3-\varepsilon})$, contradicting the NFA Acceptance hypothesis (as discussed in the first paragraph). This finishes the proof.

\smallskip
\emph{Remark:} As described above, the reduction is not many-one, since we create two instances of OMv corresponding to the matrices $M^{(1)}$ and $M^{(2)}$. However, by a slight adaptation we can make the reduction many-one. To this end, we construct the matrix 
$$M = \begin{pmatrix} M^{(1)} & 0 \\ 0 & M^{(2)} \end{pmatrix},$$ 
which stacks the matrices $M^{(1)}, M^{(2)}$ in block matrices along the main diagonal. We use algorithm~$\algo$ to preprocess the OMv instance $M$. Then given a vector $u \in \{0,1\}^N$ and a color $c \in \{1,2\}$, we can compute the Boolean product $M^{(c)} u$ by one call to the OMv instance $M$: To this end, we set $u' := (u_1,\ldots,u_N,0,\ldots,0) \in \{0,1\}^{2N}$ if $c = 1$, and $u' := (0, \ldots, 0, u_1,\ldots,u_N) \in \{0,1\}^{2N}$ if $c = 2$, and we call $\algo$ to compute $M u'$. This yields $M^{(c)} u$, padded with some zeroes. Therefore, each step $u_i := M^{(c_i)} u_{i-1}$ performed in the above reduction can be implemented by one call to the OMv instance $M$. This makes the reduction many-one.
\end{proof}


\section{Equivalences of Colored Walk}
\label{sec:equivalences} 

In this section, we prove the equivalences of Colored Walk problems stated in Lemma~\ref{lem:equivalencehypo}. We also prove an equivalence to a version without source and target nodes, see \lemref{equivalence-to-anywalk} at the end of this section. Throughout this section we abbreviate Colored Walk as CW.

\RestateGo{\restateEquivalences}
\equivalences*

For the node version of the problem, we say that the \emph{(node) color sequence} of a walk $v_0,v_1,\ldots,v_k$ is the sequence $c(v_1),c(v_2),\ldots,c(v_k)$. Similarly, for the edge version we say that the \emph{(edge) color sequence} of a walk $v_0,v_1,\ldots,v_k$ is the sequence $c(v_0,v_1),c(v_1,v_2),\ldots,c(v_{k-1},v_k)$.

\medskip
The remainder of this section is devoted to the proof of \lemref{equivalencehypo} (and to the proof of \lemref{equivalence-to-anywalk}).
To this end, we define the following notion of reductions. Here, a \emph{parameter} is a function mapping any instance to a natural number, e.g., $n,m,\ell$ are parameters of the above problems.

\begin{definition}
  Let $X,Y$ be problems with the same set of parameters $P$. We say that there is a \emph{$P$\nobreakdash-preserving (many-one) reduction from $X$ to $Y$}, written $X \lePara Y$ if there is an algorithm $A$ that given an instance $I$ of $X$ computes an equivalent instance $J$ of $Y$ such that for all parameters $p \in P$ we have $p(J) \le p(I)^{1+o(1)}$, and $A$ runs in almost-linear time $N^{1+o(1)}$ in its input size $N$.
\end{definition}

The following lemmas show $\{n,m,\ell\}$-preserving reductions between the problems from \lemref{equivalencehypo}.

\begin{lemma}\label{lem:dir-n2-dir-e2}
Directed Node-2-CW {\lePara[\{n,m,\ell\}]} Directed Edge-2-CW.
\end{lemma}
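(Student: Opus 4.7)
The plan is to use the standard ``color each edge by its head'' trick. Given an instance of Directed Node-2-CW consisting of a directed graph $G=(V,E)$, a node coloring $c \colon V \to \{1,2\}$, source $s$, target $t$, and color sequence $c_1,\ldots,c_\ell$, I will construct the Directed Edge-2-CW instance with the same graph $G$, the same source $s$, target $t$, length $\ell$, and color sequence $c_1,\ldots,c_\ell$, but with the edge coloring $c' \colon E \to \{1,2\}$ defined by $c'(u,v) := c(v)$ for each $(u,v) \in E$.

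For correctness, I will observe that the color constraints line up perfectly under the two problem definitions. A walk $s = v_0, v_1, \ldots, v_\ell = t$ in $G$ satisfies $c(v_i) = c_i$ for all $1 \le i \le \ell$ (the node-version requirement) if and only if $c'(v_{i-1},v_i) = c(v_i) = c_i$ for all $1 \le i \le \ell$ (the edge-version requirement). Note that the node color of the source $v_0 = s$ is immaterial in the node version of the problem, which is exactly the reason the indexing matches the edge version, which only reads a color on each traversed edge; this is the only spot that requires a moment of care, but it is not really an obstacle.

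For parameter preservation, the output graph is identical to $G$, so $n$ and $m$ are unchanged, and the color sequence has the same length $\ell$. The reduction just relabels the edges, which can be done in time $O(n+m)$, hence almost-linear in the input size. This establishes the $\{n,m,\ell\}$-preserving reduction $\text{Directed Node-2-CW} \lePara[\{n,m,\ell\}] \text{Directed Edge-2-CW}$ as required.

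Since the construction is this direct, I do not anticipate any real obstacle; the only thing to double-check is that the alphabet size is preserved ($2$ colors in, $2$ colors out), which is immediate from the definition of $c'$.
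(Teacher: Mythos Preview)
Your proposal is correct and essentially identical to the paper's proof: both define the edge coloring by $c'(u,v) := c(v)$ and observe that the node color sequence $c(v_1),\ldots,c(v_\ell)$ coincides with the edge color sequence $c'(v_0,v_1),\ldots,c'(v_{\ell-1},v_\ell)$, with all parameters preserved exactly.
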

\begin{proof}
Given a directed graph $G$ with node coloring $c \colon V \to \{1,2\}$, we define the edge coloring $c' \colon E \to \{1,2\}$ by setting $c'(u,v) := c(v)$ for all edges $(u,v) \in E$. This yields an equivalent Directed Edge-2-CW instance, since for any walk $v_0, v_1, \dots, v_\ell$ in $G$, the sequence of node colors $c(v_1),\dots,c(v_\ell)$ is the same as the sequence of edge colors $c'(v_0,v_1), \dots, c'(v_{\ell-1}, v_\ell)$. The reduction preserves the exact values of all parameters.
\end{proof}

\begin{lemma}\label{lem:dir-eo1-nfa}
Directed Edge-$n^{o(1)}$-CW {\lePara[\{n,m,\ell\}]} NFA Acceptance with $n^{o(1)}$ terminals.
\end{lemma}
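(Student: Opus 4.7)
The plan is to give an almost definitional reduction that views a Directed Edge-$C$-Colored Walk instance as an NFA whose transitions are labeled by the edge colors. Concretely, given an instance $(G=(V,E), c: E \to \{1,\dots,C\}, s, t, \ell, c_1,\dots,c_\ell)$ with $C = n^{o(1)}$, I would build the NFA $M = (Q, \Sigma, \delta, q_0, F)$ by setting $Q := V$, $\Sigma := \{1,\dots,C\}$, $q_0 := s$, $F := \{t\}$, and $\delta := \{(u, c(u,v), v) : (u,v) \in E\}$, and define the input string as $x := c_1 c_2 \cdots c_\ell$.

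For correctness, I would unfold the definitions on both sides: $M$ accepts $x$ iff there exists a sequence $q_0, q_1, \dots, q_\ell$ with $q_0 = s$, $q_\ell \in F = \{t\}$, and $(q_{i-1}, x_i, q_i) \in \delta$ for every $i$. By construction of $\delta$, the last condition is equivalent to $(q_{i-1}, q_i) \in E$ and $c(q_{i-1}, q_i) = c_i$, which is precisely the condition defining a colored walk of color sequence $c_1,\dots,c_\ell$ from $s$ to $t$ in $G$. So the NFA instance is equivalent to the given Colored Walk instance.

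For parameter preservation, I would simply observe that $|Q| = |V| = n$, $|\delta| = |E| = m$ (the input edges form a set and $c$ is a function, so no duplicate transitions are produced), $|x| = \ell$, and $|\Sigma| = C = n^{o(1)}$, as required by the NFA Acceptance hypothesis with $n^{o(1)}$ terminals. The reduction runs in linear time in the input size. Since each parameter matches exactly (not merely up to subpolynomial factors), the $\{n,m,\ell\}$-preserving reduction property follows.

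There is no genuine obstacle here; the lemma is essentially an observation that the NFA Acceptance problem on loop-free NFAs without multi-edges is literally the Directed Edge-Colored Walk problem, phrased with different vocabulary. The only small point worth noting in the write-up is that because $\delta$ is a set-valued relation, we must check that distinct edges produce distinct transitions, which holds since distinct edges $(u,v) \neq (u',v')$ already differ in their source or target coordinate of the triple $(u, c(u,v), v)$.
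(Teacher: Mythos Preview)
Your proposal is correct and essentially identical to the paper's own proof: the paper constructs the same NFA $M=(V,\{1,\dots,C\},\{(u,c(u,v),v)\mid (u,v)\in E\},s,\{t\})$ and the same string $c_1\cdots c_\ell$, then notes that acceptance corresponds exactly to a colored $s$--$t$ walk and that all parameters are preserved exactly. Your extra remark that distinct edges yield distinct transitions is a harmless clarification of why $|\delta|=|E|$.
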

\begin{proof}
This is a simple statement, as NFA Acceptance with $n^{o(1)}$ terminals is the generalization of Directed Edge-$n^{o(1)}$-CW where we allow loops and we allow multiple edges (with different labels) between two nodes. Since the two problems are formulated in a different language, we make the correspondence explicit:
Given a directed graph $G = (V,E)$ with designated vertices $s,t \in V$ and edge coloring $c \colon E \to \{1,\dots, C\}$ as well as a color sequence $c_1,\ldots,c_\ell \in \{1,\ldots,C\}$, we construct the NFA $M = (Q, \Sigma, \delta, q_0, F)$ by setting $Q := V$, $\Sigma := \{1,\ldots,C\}$, $\delta := \{(u,c(u,v),v) \mid (u,v) \in E\}$, $q_0 := s$, $F := \{t\}$, and we construct the string $S := c_1 c_2 \ldots c_\ell \in \Sigma^\ell$. It is straightforward to show that $M$ accepts $S$ if and only if there is a walk from $s$ to $t$ with color sequence $c_1 c_2 \ldots c_\ell$ in $G$. All parameters are preserved since $|Q|=|V|, |\delta|=|E|$ and $\ell$ remains unchanged.
\end{proof}

\begin{lemma}\label{lem:nfa-dir-eo1}
NFA Acceptance with $n^{o(1)}$ terminals {\lePara[\{n,m,\ell\}]} Directed Node-$n^{o(1)}$-CW.
\end{lemma}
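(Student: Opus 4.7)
The plan is to build a Directed Node-$n^{o(1)}$-Colored Walk instance that simulates the given NFA $M=(Q,\Sigma,\delta,q_0,F)$ running on input $x$, while preserving the parameters $n = |Q|$, $m = |\delta|$, $\ell = |x|$ up to sub-polynomial factors. The main obstacle is the mismatch between NFA transition labels (which live on \emph{edges}) and Node-CW colors (which live on \emph{nodes}): the obvious workaround of subdividing every transition with a new labeled node would blow up the node count from $n$ to $n+m$, violating the $n^{1+o(1)}$ requirement whenever $m$ is superlinear in $n$. Fortunately, the alphabet is restricted to $|\Sigma|=n^{o(1)}$ symbols, which is the key resource I intend to exploit.

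First, I would create a vertex $v_{q,a}$ for every pair $(q,a) \in Q \times \Sigma$ and assign it color $a$. Thus every state $q \in Q$ has $|\Sigma|$ colored copies, one per possible symbol that could have been last read. For every transition $(q,a,q') \in \delta$ I would install edges $v_{q,b} \to v_{q',a}$ for all $b \in \Sigma$; the intuition is that the color of the destination encodes which symbol the NFA just consumed, regardless of what symbol was consumed previously. To handle the boundary I would add two extra vertices $s$ and $t$, both colored with a fresh dummy symbol $\$ \notin \Sigma$, edges $s \to v_{q',a}$ for each $(q_0,a,q') \in \delta$ to kick off the simulation, and edges $v_{q,a} \to t$ for every $q \in F$ and every $a \in \Sigma$ to certify acceptance. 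Finally I would declare the color sequence to be $x_1, x_2, \ldots, x_\ell, \$$ of length $\ell + 1$.

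For correctness, the color constraint forces any valid walk $s = v_0, v_1, \ldots, v_{\ell+1} = t$ to satisfy $v_i = v_{q_i, x_i}$ for some state $q_i$ (for $1 \le i \le \ell$), and the edge set precisely forces $(q_{i-1}, x_i, q_i) \in \delta$ together with $q_\ell \in F$; conversely any accepting run translates into such a walk, giving the desired equivalence. For parameter preservation, the new graph has $2 + n|\Sigma| = n^{1+o(1)}$ nodes, at most $O(m|\Sigma| + n|\Sigma|) = m^{1+o(1)}$ edges (using $m \ge n$, so $n^{o(1)} = m^{o(1)}$), color sequence length $\ell + 1 = \ell^{1+o(1)}$, and $|\Sigma|+1 = n^{o(1)}$ colors in total, so the target is indeed Directed Node-$n^{o(1)}$-CW. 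The construction is clearly computable in almost-linear time in its output size, completing the $\{n,m,\ell\}$-preserving reduction.
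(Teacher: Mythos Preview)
Your construction is essentially the paper's: both use vertex set $Q\times\Sigma$ colored by the $\Sigma$-coordinate, with each transition $(q,a,q')$ spawning the edges $(q,b)\to(q',a)$ for all $b\in\Sigma$. The only substantive difference is in boundary handling---the paper first normalizes the NFA to a single accepting state and then takes source $(q_0,\sigma_0)$ and target $(f_0,x_\ell)$ directly in the product, whereas you attach fresh source/sink vertices with a dummy color $\$$; both work.

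One small gap: if the NFA has a self-transition $(q,a,q)\in\delta$, your construction produces a self-loop $v_{q,a}\to v_{q,a}$, but the Colored Walk problems are posed on \emph{simple} graphs. The paper handles this by first doubling the state set to $Q\times\{1,2\}$ (alternating layers) to eliminate loops at constant-factor cost; you should add the same preprocessing step.
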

\begin{proof}
Given an NFA $M = (Q, \Sigma, \delta, q_0, F)$ with set of states $Q$, input alphabet $\Sigma$, transitions $\delta \subseteq Q \times \Sigma \times Q$, initial state $q_0$, and set of accepting states $F \subseteq Q$, as well as a string $S \in \Sigma^\ell$, we first transform it into an equivalent instance without loops and with exactly one accepting state. 

To remove loops, we replace the states $Q$ by two copies: $\hat Q := Q \times \{1,2\}$. Each transition $(q,\sigma,q')$ is replaced by two transitions $((q,1),\sigma,(q',2))$ and $((q,2),\sigma,(q',1))$; note that this ensures that we have no loops (i.e., no transition starts and ends in the same state). We also replace $q_0$ by $\hat q_0 := (q_0,1)$ and $F$ by $\hat F := F \times \{1,2\}$. It is easy to check that the new NFA accepts $S$ if and only if the old NFA accepts $S$, and the size of the NFA is only changed by a constant factor.

To ensure exactly one accepting state, we add a new state $f_0$ to $\hat Q$. We fix an arbitrary symbol $\sigma \in \Sigma$, and we add transitions $(f,\sigma,f_0)$ for all $f \in \hat F$. Finally, we replace $\hat F$ by $\{f_0\}$ and $S$ by $S \sigma$. Note that this results in an equivalent instance with exactly one accepting state and all sizes stay the same up to constant factors. Therefore, in the following we assume without loss of generality that the given NFA $M = (Q, \Sigma, \delta, q_0, F)$ satisfies $F = \{f_0\}$ and has no loops. Moreover, we can assume without loss of generality that $\Sigma = \{1,\ldots,|\Sigma|\}$.

We define a graph $G=(V, E)$ with vertex set $V= Q \times \Sigma$. For each transition $(q,\sigma',q') \in \delta$, we add the edges $(\,(q,\sigma)\,,\,(q',\sigma')\,) \in E$ for all $\sigma \in \Sigma$. We define the vertex coloring $c \colon V \to \{1,\ldots,|\Sigma|\}$ by $c(\, (q,\sigma) \,) = \sigma$. Observe that for any choice of ``starting color'' $\sigma_0 \in \{1,\dots, |\Sigma|\}$, the mapping from any transition sequence $q_0 \stackrel{\sigma_1}{\to} q_1 \stackrel{\sigma_2}{\to} q_2 \dots \stackrel{\sigma_\ell}{\to} q_\ell$ in $M$ to the walk $v_0 := (q_0,\sigma_0)\, , \, v_1:= (q_1, \sigma_1)\, , \dots, \, v_\ell := (q_\ell, \sigma_\ell)$ in $G$ is indeed a one-to-one mapping from transition sequences in $M$ to walks in $G$ starting in the node $(q_0,\sigma_0)$ such that the string read by the transition sequence equals the color sequence of the walk. In particular, $M$ accepts the string $S$ if and only if there is a walk from $s := (q_0,\sigma_0)$ to $t := (f_0,S[\ell])$ in $G$ with color sequence $S[1],\ldots,S[\ell]$.
Thus, for any NFA~$M$ we can construct, in linear time in the output, an equivalent Directed Node-$|\Sigma|$-CW instance.

We verify that the parameters are preserved for $|\Sigma|=|Q|^{o(1)}$: The number of vertices in $G$ is bounded by $|V| = |Q|\cdot |\Sigma| = |Q|^{1+o(1)}$, similarly, we have $|E| \le |\delta|\cdot |\Sigma| = |\delta|^{1+o(1)}$. Finally, the length $\ell$ of the color sequence equals the length of the string $S$.
\end{proof}

\begin{lemma}\label{lem:dir-nn-dir-n2}
Directed Node-$n$-CW {\lePara[\{n,m,\ell\}]} Directed Node-2-CW.
\end{lemma}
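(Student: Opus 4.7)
The plan is to binary-encode the node colors. Setting $k := \lceil \log_2(n+1) \rceil$ and letting $\bin(c) \in \{1,2\}^k$ denote a fixed $k$-bit binary encoding of a color $c \in \{1,\ldots,n\}$ (using ``$1$'' and ``$2$'' as the two bit symbols), I would replace each node $v \in V$ by a chain of $k$ fresh nodes $v^{(1)}, \ldots, v^{(k)}$ with new colors $c'(v^{(j)}) := \bin(c(v))_j$, add the internal chain edges $v^{(j)} \to v^{(j+1)}$ for $1 \le j < k$, and for every original edge $(u,v) \in E$ add a single cross-gadget edge $u^{(k)} \to v^{(1)}$. The new source and target are $s' := s^{(k)}$ and $t' := t^{(k)}$, and the new color sequence of length $k\ell$ is the concatenation $\bin(c_1)\,\bin(c_2)\,\cdots\,\bin(c_\ell)$.

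For correctness I would argue that every walk starting at $s^{(k)}$ in the new graph is forced to alternate between one cross-gadget edge and a full traversal of a length-$(k-1)$ chain: from $v^{(k)}$ the only outgoing edges go to some $w^{(1)}$, and from $v^{(j)}$ with $j<k$ the only outgoing edge is $v^{(j+1)}$. Hence walks from $s^{(k)}$ of length exactly $k\ell$ are in bijection with walks $s = v_0, v_1, \ldots, v_\ell$ in $G$: the $i$th cross-gadget step chooses $v_i$ with $(v_{i-1},v_i) \in E$, and the following chain contributes the $k$ colors $\bin(c(v_i))$. Such a walk ends at $v_\ell^{(k)} = t^{(k)}$ iff $v_\ell = t$, and matches the prescribed color sequence iff $c(v_i) = c_i$ for every $1 \le i \le \ell$, which yields the desired equivalence of the two instances.

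The remaining step is the parameter bookkeeping: $n' = kn = n^{1+o(1)}$, $m' = (k-1)n + m$, $\ell' = k\ell = \ell^{1+o(1)}$ in any polynomially-scaling regime, and the construction itself runs in almost-linear time $O(k(n+m+\ell))$ in its input size. The only mildly delicate point, which I expect to be the main thing to flag carefully, is that $m' \le m^{1+o(1)}$ fails when $m$ is much smaller than $n$. I plan to handle this by a linear-time preprocessing step that deletes every vertex not reachable from $s$ or not co-reachable to $t$; after such pruning every remaining vertex lies on some $s$-rooted subtree of reachable vertices, so $m \ge n - 1$ and $m' = O(m \log n) = m^{1+o(1)}$ follows. (Alternatively, one could note that \lemref{dir-nn-dir-n2} is only invoked inside \lemref{equivalencehypo} in the regime $m = \Theta(n^\alpha)$ with $\alpha \ge 1$, where $m \ge n$ is automatic.) The fact that the starting vertex $s'$ now has some fixed new color is harmless, since the Colored Walk definition does not constrain $c(v_0)$.
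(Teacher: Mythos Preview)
Your proposal is correct and essentially identical to the paper's proof: the paper also replaces each vertex by a chain of $B=\lceil \log_2 C\rceil$ copies whose colors spell the binary encoding of the original color, routes each original edge from the last copy of its tail to the first copy of its head, and concatenates the binary encodings of $c_1,\dots,c_\ell$ to form the new color sequence. Your extra care about the $m<n$ edge case is in fact more than the paper provides; the paper simply asserts $|E'|=|E|+|V|B\le |E|^{1+o(1)}$, implicitly relying on the standing convention $m\ge n$ (cf.\ the introduction and the $\alpha\ge 1$ assumption in \lemref{equivalencehypo}).
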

\begin{proof}
Given a directed graph $G$ with node coloring $c \colon V \to \{1,\dots, C\}$, we let $B := \lceil \log_2 C \rceil$ and define a graph $G' = (V', E')$ with vertex set $V' = V \times [B]$. For every edge $(u,v)\in E$, we add the edge $(\, (u,B)\, , \, (v, 1)\,)\in E'$, and for every $v\in V$ and $1 \le i < B$, we add the ``path'' edge $(\, (v, i)\,, \, (v, i+1) \,) \in E'$. For a color $c\in \{1,\dots, C\}$, let $\bin(c, i)$ be the $i$-th bit in the $B$-bit binary representation of $c$. We define the node coloring $c'\colon V' \to \{1,2\}$ by $c'(\, (v, i)\,) = \bin(c(v), i) + 1$ for all $v\in V, i \in [B]$. Thus, for any Directed Node-$n$-CW instance $G, s, t, (c_1, \dots, c_\ell)$, we define a corresponding instance on the graph $G'$ with source $(s,B)$, target $(t, B)$ and color sequence $c'_1, \dots, c'_{\ell B}$ where we set $c'_{(j-1)B + i} := \bin(c_j, i)+1$ for all $1\le j \le \ell$ and $1\le i \le B$. It is straightforward to verify that $v_0, v_1, \dots, v_\ell$ is a walk in $G$ with color sequence $c_1 = c(v_1), \dots, c_\ell = c(v_\ell)$ if and only if $v'_0, v'_1, \dots, v'_{\ell B}$ with $v'_0 = (v_0, B)$ and $v'_{(j-1)B + i} = (v_j, i)$, $1\le j\le \ell, 1\le i \le B$ is a walk in $G'$ with color sequence $c'_1 = c'(v'_1), \dots, c'_{\ell B} = c'(v'_{\ell B})$. 

Note that for $C=n$, we have $B = \Oh(\log n)$ and thus $|V'| = B|V| \le |V|^{1+o(1)}$, $|E'| = |E|+|V|B \le |E|^{1+o(1)}$ and $\ell B \le \ell^{1+o(1)}$.
\end{proof}

To reduce from Node-2-CW in a \emph{directed} graph $G$ to Edge-2-CW in an \emph{undirected} graph $G'$, a natural attempt is the following: each vertex $v$ is replaced by two vertices, $\vin$ and $\vout$, such that a directed edge $(u,v)$ can be represented by the undirected edge $\{\uout, \vin\}$. Additionally, we introduce the edges $\{\vin, \vout\}$ of color $c(v)$. If we were allowed to introduce an additional color $\cspecial \notin \{1,2\}$, we could label each edge $\{\uout, \vin\}$ with $\cspecial$, and any walks with node color sequence $c_1, c_2, \dots, c_\ell$ in $G$ would uniquely correspond to walks with edge colors sequence in $\cspecial,  c_1,  \cspecial, c_2,  \dots \cspecial,  c_\ell$ and vice versa. To avoid the blow-up in the number of colors, however, we must reuse a color for $\cspecial$, say $\cspecial = 1$. Since this would allow a walk to use an edge $\{\uin, \vout\}$ in the ``reverse'' direction towards $\uin$ whenever we are supposed to ``check'' that a node has color 1, we might obtain illegal transitions in the resulting walk. An analogous reasoning applies for a reduction to undirected Node-2-CW.

We use slightly more involved gadgetry for both reductions: For a given directed graph $G= (V,E)$ and an integer $P$, we create an \emph{undirected} graph $G' = (V', E')$ with vertex set $V' = \{ \vin, \vout, v^{(1)}, \dots, v^{(P)} \mid v\in V\}$ as follows: we introduce, for each edge $(u,v) \in E$, the edges $\{\uout, \vin\}$ in $E'$, as well as, for all $v\in V$, all ``path edges'' $\{\vin, v^{(1)}\}, \{v^{(P)}, \vout\}$ and $\{v^{(i)}, v^{(i+1)}\}$ for all $1 \le i < P$. By choosing the color sequences for all path vertices/edges appropriately, we will be able to enforce that every edge is used in ``forward'' direction, i.e., any feasible walk traverses an edge $\{\uout, \vin\}$, then all path edges toward $\vout$, then an edge $\{\vout, w_\mathrm{in}\}$, and so on.

\begin{lemma}\label{lem:dir-n2-undir-e2}
Directed Node-2-CW {\lePara[\{n,m,\ell\}]} Undirected Edge-2-CW.
\end{lemma}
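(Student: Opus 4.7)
The plan is to carry out the construction sketched in the paragraph preceding the lemma, choosing a suitable constant $P$, edge colors for $G'$, and a target color sequence so that walks from $s_\mathrm{out}$ to $t_\mathrm{out}$ in $G'$ matching the new sequence correspond one-to-one with valid Directed Node-$2$-CW walks in $G$. Concretely, I would color every cross edge $\{u_\mathrm{out}, v_\mathrm{in}\}$ by the color $1$ and, for each vertex $v$, color the $P+1$ path edges of $v$'s gadget by a pattern that depends on $c(v)$. A natural first candidate with $P = 4$ is to color the path edges $e_0,\dots,e_4$ of $v$'s gadget by $(2,\,1,\,c(v),\,1,\,2)$. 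The source in $G'$ is $s_\mathrm{out}$, the target is $t_\mathrm{out}$, and the target color sequence is the length-$(P+2)\ell$ string $c' = 1,2,1,c_1,1,2,\,1,2,1,c_2,1,2,\,\ldots,\,1,2,1,c_\ell,1,2$.

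The forward direction will then be routine: a walk $s = v_0,\ldots,v_\ell = t$ in $G$ with node colors $c_1,\ldots,c_\ell$ lifts to the walk in $G'$ that, for each $j$, first takes the cross edge $\{(v_{j-1})_\mathrm{out},(v_j)_\mathrm{in}\}$ and then traverses $v_j$'s gadget path from $(v_j)_\mathrm{in}$ to $(v_j)_\mathrm{out}$, producing exactly the color sequence $c'$.

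The heart of the proof is the backward direction. Here I would argue that the color design is rigid: any walk in $G'$ reading $c'$ must alternate strictly between a single cross edge and a forward traversal of a complete gadget. The argument proceeds by a position-by-position case analysis using three structural facts: (i) at $v_\mathrm{in}$ or $v_\mathrm{out}$ the only non-cross incident edge has color $2 \ne 1$, so the choice between ``leave the gadget'' and ``continue inside'' is determined by the next sequence color; (ii) at each interior vertex $v^{(i)}$, the next required color picks out a unique admissible incident edge; and (iii) the middle edge of color $c(v)$ simultaneously performs the node-color check $c(v) = c_j$ and breaks the internal symmetry of the gadget. The main obstacle is ruling out \emph{phase-shifted} walks that wobble inside a gadget whose $c(v)$ is wrong and then borrow colors from the beginning of the next block; any such deviation has to be shown to eventually run into a color mismatch. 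If the concrete pattern $(2,1,c(v),1,2)$ above does not completely eliminate such phase shifts, the remedy is to enlarge $P$ and use a more asymmetric color pattern (for example one whose structure is coprime to the natural wobble parities), so that any deviation from the intended forward traversal is forced to produce a color mismatch after a bounded number of steps.

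Finally, parameter preservation is immediate from the construction: $|V'| = (P+2)\,n$, $|E'| = m + (P+1)\,n$, and $\ell' = (P+2)\,\ell$. For constant $P$ all three parameters are preserved up to constant factors, yielding the desired $\{n,m,\ell\}$-preserving reduction from Directed Node-$2$-CW to Undirected Edge-$2$-CW.
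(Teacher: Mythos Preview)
Your overall scheme matches the paper's, but there is a genuine gap: the concrete pattern $(2,1,c(v),1,2)$ is \emph{incorrect}, and the hedge ``if it fails, enlarge $P$'' is not a proof. Here is a counterexample. Take $V=\{s,v,w,x,t\}$ with directed edges $(s,v),(v,w),(x,w),(x,t)$, node colors $c(v)=c(t)=2$ (the colors of $s,w,x$ are irrelevant), and target sequence $c_1=c_2=c_3=1$. The Directed Node-$2$-CW instance is NO, since the only out-neighbor of $s$ is $v$ with $c(v)=2\ne c_1$. Yet in your $G'$ there is a walk from $s_\mathrm{out}$ to $t_\mathrm{out}$ matching $(1,2,1,1,1,2)^3$: in block~1 it wobbles inside $v$'s gadget,
\[s_\mathrm{out}\to v_\mathrm{in}\to v^{(1)}\to v^{(2)}\to v^{(1)}\to v^{(2)}\to v^{(3)};\]
in block~2 it finishes $v$'s path and hops three cross edges,
\[v^{(3)}\to v^{(4)}\to v_\mathrm{out}\to w_\mathrm{in}\to x_\mathrm{out}\to t_\mathrm{in}\to t^{(1)};\]
and in block~3 it wobbles inside $t$'s gadget back into phase,
\[t^{(1)}\to t^{(2)}\to t^{(3)}\to t^{(4)}\to t^{(3)}\to t^{(4)}\to t_\mathrm{out}.\]
One checks edge by edge that the colors are $1,2,1,1,1,2$ in each block (using $c(v)=c(t)=2$). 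So the reduced instance is YES and the reduction is unsound. The root cause is that your pattern is palindromic and your claimed property~(ii) is simply false: at $v^{(2)}$ reading color~$1$ you may go back to $v^{(1)}$, and the final~$2$ of the block then advances you to $v^{(3)}$, permanently shifting phase.

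The paper fixes exactly this by choosing an \emph{asymmetric} pattern, $(2,\,c(v),\,c(v),\,1,\,2)$, with target block $(1,2,c_j,c_j,1,2)$, and then establishes a sharp local claim by exhaustive case analysis over $c_j\in\{1,2\}$: every length-$5$ walk from $v_\mathrm{in}$ with edge colors $(2,c_j,c_j,1,2)$ must equal $(v_\mathrm{in},v^{(1)},\dots,v^{(4)},v_\mathrm{out})$ and must have $c(v)=c_j$. The asymmetry---two consecutive copies of $c(v)$ followed by a fixed $1$---is precisely what kills the wobbles; identifying and verifying such a pattern is the real content of the lemma, and your proposal leaves that undone.
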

\begin{proof}
Given a directed graph $G=(V,E)$, we construct the undirected version $G'$ using path length $P = 4$ as described above. We define the coloring $c' \colon E \to \{1,2\}$ as follows: for all $(u,v)\in E$, we set $c'(\uout, \vin) = 1$. For each $v\in V$, we set the colors of the path edges to  
\[ (c'(\vin, v^{(1)})\,,\, c'(v^{(1)}, v^{(2)})\,,\, c'(v^{(2)}, v^{(3)}) \, , \, c'(v^{(3)}, v^{(4)})\, , \, c'( v^{(4)}, \vout )) := (2 \, , \, c(v) \, , \, c(v) \, , \, 1 \, , \, 2). \]
The main property of this construction is captured by the following claim.
In the remainder of the proof, for any $c \in \{1,2\}$ and $v \in V$ we set:
\[ \colors(c) := (2, c, c, 1, 2) \qquad\text{and}\qquad \pathseq(v) := (\vin, v^{(1)}, v^{(2)}, v^{(3)}, v^{(4)}, \vout). \]
In what follows, for any walk $z=(z_0,\dots, z_\ell)$ with color sequence $c'= (c'_1, \dots, c'_\ell)$, we use the shorthands $z_{\dots j}= (z_0,\dots, z_j)$ and $c'_{\dots j} = (c'_1, \dots, c'_j)$ to denote the length-$j$ prefix of $z$ and its corresponding color sequence, respectively.
\begin{claim}
Let $v\in V$ and $c\in \{1,2\}$. There is a walk $z= (z_0, z_1, \dots, z_5)$ in $G'$ with $z_0 = \vin$ and edge color sequence $\colors(c)$ if and only if $c = c(v)$ and $z = \pathseq(v)$.
\end{claim}
\begin{proof}
If $c=c(v)$, then the walk $\pathseq(v)$ clearly has edge color sequence $(2, c, c, 1, 2)$. For the converse, let $z = (z_0,\dots, z_5)$ be a walk with $z_0 = \vin$ and color sequence $\colors(c)$. We analyze the possible values for $z_{\dots i}$ when observing the color sequence $\colors(c)_{\dots i}$ for $i=1, \dots, 5$:
\begin{itemize}
\item $z_1 = v^{(1)}$ and thus $z_{\dots 1} = \pathseq(v)_{\dots 1}$ always holds, as $\{\vin, v^{(1)}\}$ is the only edge adjacent to $\vin$ that has color 2, because the edges $\{\uout,\vin\}$ have color 1.
\item Consider the case $c=1$:
\begin{itemize}
\item If $c(v) \ne c = 1$, then there is no edge leaving $z_1 = v^{(1)}$ with color $c = 1$. Thus we may assume in the remainder of the case that $c(v)=c=1$. 
\item It follows that $z_2 = v^{(2)}$ and hence $z_{\dots 2} = \pathseq(v)_{\dots 2}$.
\item Since both edges adjacent to $z_2=v^{(2)}$ have color $c(v)=c=1$, we have either $z_{\dots 3} = \pathseq(v)_{\dots 3}$ or $z_{3} = v^{(1)}$.
\item Since the only nodes adjacent to $v^{(3)}$ or $v^{(1)}$ via an edge with color 1 are $v^{(4)}$ and $v^{(2)}$, we have that $z_{4} \in \{v^{(4)},v^{(2)}\}$, where $z_4 = v^{(4)}$ only occurs if $z_{\dots 4} = \pathseq(v)_{\dots 4}$. 
\item Finally, as desired, $z_5 = \vout$ must hold, since among $\{v^{(4)}, v^{(2)}\}$, only $v^{(4)}$ has an adjacent edge of color 2, which leads to $\vout$. This yields $z = \pathseq(v)$ for $c=1$.
\end{itemize}
\item We analyze the remaining case $c=2$:
\begin{itemize}
\item We have $z_2 = \vin$, or, only if $c(v)=c=2$, possibly $z_2 = v^{(2)}$ and hence $z_{\dots 2} = \pathseq(v)_{\dots 2}$.
\item Likewise, since $\{\vin, v^{(1)}\}$ is the only edge adjacent to $\vin$ with color $c=2$, we must have $z_3 = v^{(1)}$, or, only if $c(v)=c=2$ and $z_{\dots 2} = \pathseq(v)_{\dots 2}$, possibly $z_3 = v^{(3)}$ (and hence $z_{\dots 3} = \pathseq(v)_{\dots 3}$).
\item Note that $v^{(1)}$ only has an adjacent edge of color 1 if $c(v)=1\ne c$. In contrast, if $c(v) = c= 2$, $v^{(3)}$ has its only adjacent edge of color 1 to $v^{(4)}$. Thus, we have either $z_4 = v^{(2)}$, which can only happen if $c(v) \ne c$, or $z_4 = v^{(4)}$, which can only happen if $c(v) = c$ and $z_{\dots 3} = \pathseq(v)_{\dots 3}$. In the latter case, we thus must have $z_{\dots 4} = \pathseq(v)_{\dots 4}$.
\item Finally, $z_5 = \vout$, $c(v)=c$ and $z_{\dots 4}=\pathseq(v)_{\dots 4}$ must hold, since in the case $c\ne c(v)$, $v^{(2)}$ has no adjacent edges of color 2, while $v^{(4)}$ has an edge of color 2 to $\vout$. Thus $z = \pathseq(v)$ and $c(v) = c$.
\end{itemize}
\end{itemize}
\end{proof}
Given a Directed Node-2-CW instance $G$, $s,t\in V$ and $c_1, \dots, c_\ell$, we construct, in linear time in the output, an Undirected Edge-2-CW instance $G'$ with source $s_\mathrm{out}$, target $t_\mathrm{out}$ and color sequence $1, \colors(c_1), 1, \colors(c_2), \dots, 1, \colors(c_\ell)$. For any walk $s = v_0, v_1, \dots, v_\ell =t$ with color sequence $c_1,\ldots,c_\ell$ in $G$, the walk $v_0, \pathseq(v_1), \pathseq(v_2), \dots, \pathseq(v_\ell)$ in $G'$ has the desired color sequence $1, \colors(c_1), 1, \colors(c_2), \dots, 1, \colors(c_\ell)$ by the above claim. Conversely, we see that for each color substring $1, \colors(c_i)$ any walk in $G'$ that starts in some node $\uout$ is of the form $(\uout, \pathseq(v))$ for some $v\in V$ with $(u,v)\in E$ and $c(v)=c_i$, since the only 1-colored edges adjacent to $\uout$ lead to some $\vin$ with $(u,v)\in E$ and the above claim proves that the walk $\pathseq(v)$ must follow, which requires $c(v)=c_i$. By repeated application of this fact, any walk in $G'$ with color sequence $1, \colors(c_1), 1, \colors(c_2), \dots, 1, \colors(c_\ell)$ corresponds to a walk in $G$ with color sequence $c_1,\dots, c_\ell$, as desired. Note that all parameters have increased by at most a constant factor, which yields the desired reduction.
\end{proof}

For a reduction to Undirected \emph{Node}-2-CW, we imitate the above reduction by defining a suitable node color sequence for path nodes. The analysis is slightly simpler than for Edge-2-CW.

\begin{lemma}\label{lem:dir-n2-undir-n2}
Directed Node-2-CW {\lePara[\{n,m,\ell\}]} Undirected Node-2-CW.
\end{lemma}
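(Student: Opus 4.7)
The plan is to reuse the same undirected graph $G'$ from the proof of \lemref{dir-n2-undir-e2}, namely with vertex set $V' = \{\vin, \vout, v^{(1)}, \dots, v^{(4)} : v \in V\}$, gadget path edges along $\vin, v^{(1)}, v^{(2)}, v^{(3)}, v^{(4)}, \vout$, and transition edges $\{\uout, \vin\}$ for each $(u,v) \in E$, but to equip it with a node coloring $c'\colon V' \to \{1,2\}$ instead of an edge coloring. A natural attempt is to set $c'(\vin) = c'(\vout) = 1$ for every $v \in V$ (so all ``transition'' nodes share the color $1$) and to color the interior of each gadget by $(c'(v^{(1)}), c'(v^{(2)}), c'(v^{(3)}), c'(v^{(4)})) = (2, c(v), c(v), 2)$, producing a per-gadget node color sequence $\colors(c) := (2, c, c, 2, 1)$. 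The reduction then maps a Directed Node-2-CW instance $(G, s, t, c, c_1, \dots, c_\ell)$ to an Undirected Node-2-CW instance on $G'$ with source $s_\mathrm{out}$, target $t_\mathrm{out}$, and color sequence $1, \colors(c_1), 1, \colors(c_2), \dots, 1, \colors(c_\ell)$.

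The key step is to prove a claim analogous to the claim in the proof of \lemref{dir-n2-undir-e2}: every walk $(z_0, z_1, \dots, z_5)$ in $G'$ with $z_0 = \vin$ whose node color sequence $(c'(z_1), \dots, c'(z_5))$ equals $\colors(c)$ must satisfy $c = c(v)$ and $z_i = \pathseq(v)_i$ for all $i$, where $\pathseq(v) := (\vin, v^{(1)}, v^{(2)}, v^{(3)}, v^{(4)}, \vout)$. The argument splits on $c$. For $c = 2$ the walk is forced forward: from $\vin$ the only color-$2$ neighbor is $v^{(1)}$; the next required color $2$ rules out $\vin$ and forces $z_2 = v^{(2)}$ with $c(v) = 2$; and the final color-$1$ step must land at $\vout$ (the only color-$1$ neighbor of $v^{(4)}$ reachable in time), which via a backward induction pins the remaining steps to $\pathseq(v)$. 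For $c = 1$ the analysis is more delicate because of potential backtrackings from $v^{(1)}$ to $\vin$ and from $\vin$ onward to some $\uout$ with $(u,v) \in E$, or intermediate visits to $v^{(3)}$ in place of $\vout$.

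The main obstacle is to rule out these spurious detours in the case $c = 1$. The crucial observation is that $\colors(c)$ together with the inter-gadget transition color $1$ is designed so that, even if the local walk admits alternative intermediate vertices, the only way the walk can be extended by the next gadget's block---whose first required color is $1 = c'(w_\mathrm{in})$ for the next entry vertex $w_\mathrm{in}$ with $(v,w) \in E$---is to actually end the block at $\vout$; any detour endpoint (an internal path node or an $\uout$ vertex) either has no color-$1$ neighbor that starts a new valid gadget or forces the walk into a dead-end inside the same gadget. If an ambiguity remains after this local analysis (e.g., because $\colors(c)$ happens to be palindromic for $c = 1$), we fall back on a slight enlargement to $P = 5$ internal path nodes with an asymmetric color pattern that breaks backward traversals, as hinted at by the remark that the analysis ``is slightly simpler than for Edge-2-CW''. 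Once the claim is established, correctness of the reduction follows by induction over the $\ell$ gadgets exactly as in the proof of \lemref{dir-n2-undir-e2}, and parameter preservation ($n', m', \ell'$ all linear in $n, m, \ell$) is immediate since each gadget has constant size.
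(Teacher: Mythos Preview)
Your proposed node coloring $(c'(v^{(1)}), c'(v^{(2)}), c'(v^{(3)}), c'(v^{(4)})) = (2, c(v), c(v), 2)$ does not work, and the spurious walk you worry about is in fact fatal. Take $c = 1$, so $\colors(1) = (2,1,1,2,1)$. From $z_0 = \vin$ the walk
\[
\vin \;\to\; v^{(1)} \;\to\; \vin \;\to\; \uout \;\to\; u^{(4)} \;\to\; \uout
\]
(for any $u$ with $(u,v)\in E$) has node color sequence $(2,1,1,2,1)$, so it matches $\colors(1)$ without ever checking $c(v)$. Worse, it ends at $\uout$, whose color-$1$ neighbors are exactly the $w_\mathrm{in}$ with $(u,w)\in E$; hence the next inter-gadget step and the subsequent block $\colors(c_{i+1})$ can be continued normally. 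Globally this means the $G'$-walk can ``teleport'' from an in-neighbor $x$ of $v$ to an arbitrary in-neighbor $u$ of $v$ and then proceed along an edge $(u,w)$, which does not correspond to any walk in $G$. Your correctness argument therefore fails, and the fallback of ``enlarging to $P=5$ with an asymmetric pattern'' is not a proof.

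The paper avoids this by choosing a different interior coloring: $(c'(v^{(1)}), c'(v^{(2)}), c'(v^{(3)}), c'(v^{(4)})) = (2, c(v), 2, 2)$, so that $\colors(c) = (2, c, 2, 2, 1)$. The point is that after the single ``check'' position $v^{(2)}$, the suffix $(2,2,1)$ is \emph{independent of $c(v)$} and has no color-$1$ entry until the very end. This kills the escape through $\vin$: if the walk backtracks $\vin \to v^{(1)} \to \vin$, the next required color is $2$, which forces it back to $v^{(1)}$ rather than out to some $\uout$; from $v^{(1)}$ the following required color $2$ can only reach $v^{(2)}$ if $c(v)=2$, and then the final color $1$ has no target since both neighbors $v^{(1)},v^{(3)}$ of $v^{(2)}$ are color $2$. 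All detours dead-end, and the local claim holds cleanly with $P=4$. Replacing your color pattern by $(2, c(v), 2, 2)$ is the missing idea.
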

\begin{proof}
Given a directed graph $G=(V,E)$, we construct the undirected version $G'$ using path length $P = 4$ as described above. We define the coloring $c' \colon V' \to \{1,2\}$, as follows: for all $v\in V$, we set $c'(\vin) = c'(\vout) = 1$ and set
\[ (c'(v^{(1)})\,,\, c'(v^{(2)})\,,\, c'(v^{(3)}) \, , \, c'(v^{(4)})) := (2 \, , \, c(v) \, , \, 2 \, , \, 2). \]
The main property of this construction is captured by the following claim. In the remainder of the proof, for any $c \in \{1,2\}$ and $v \in V$ we set:
\[ \colors(c) := (2, c, 2, 2, 1) \qquad\text{and}\qquad \pathseq(v) := (\vin, v^{(1)}, v^{(2)}, v^{(3)}, v^{(4)}, \vout). \]

Recall that for any walk $z=(z_0,\dots, z_\ell)$ with color sequence $c'= (c'_1, \dots, c'_\ell)$, we use the shorthands $z_{\dots j}= (z_0,\dots, z_j)$ and $c'_{\dots j} = (c'_1, \dots, c'_j)$ to denote the length-$j$ prefix of $z$ and its corresponding color sequence, respectively.
\begin{claim}
Let $v\in V$ and $c \in \{1,2\}$. There is a walk $z = (z_0, z_1, \dots, z_5)$ with $z_0 = \vin$ and color sequence $\colors(c)$ in $G'$ if and only if $c = c(v)$ and $z = \pathseq(v)$.
\end{claim}
\begin{proof}
If $c=c(v)$, then the walk $\pathseq(v)$ clearly has node-color sequence $(2, c, 2, 2, 1)$. For the converse, let $z=(z_0,\dots, z_5)$ be a walk with $z_0 =\vin$ and color sequence $\colors(c)$. We analyze the possible values for $z_{\dots i}$ when observing the color sequence $\colors(c)_{\dots i}$ for $i=1,\dots 5$:
\begin{itemize}
\item We have $z_1 = v^{(1)}$ and thus $z_{\dots 1} = \pathseq(v)_{\dots 1}$, since $v^{(1)}$ is the only neighbor of $\vin$ with color 2.
\item We have $z_2 = \vin$ (which might happen if $c=1$), or $z_2 = v^{(2)}$ (which can happen if and only if $z_{\dots 1} = \pathseq(v)_{\dots 1}$ and $c=c(v)$). In the latter case, this yields $z_{\dots 2} = \pathseq(v)_{\dots 2}$ and $c=c(v)$.
\item Since the only neighbor of $\vin$ with color 2 is $v^{(1)}$, we have either that $z_3 = v^{(1)}$ or, only if $z_{\dots 2} = \pathseq(v)_{\dots 2}$ and $c=c(v)$, that $z_3=v^{(3)}$. In the latter case, it holds that $z_{\dots 3} = \pathseq(v)_{\dots 3}$ and $c=c(v)$.
\item It follows that $z_4 = v^{(2)}$ (which might happen if $c(v)=2$), or $z_4=v^{(4)}$ (which can happen only if $z_{\dots 3} = \pathseq(v)_{\dots 3}$ and $c=c(v)$). In the latter case, it holds that $z_{\dots 4} = \pathseq(v)_{\dots 4}$ and $c=c(v)$.
\item Finally, since $v^{(2)}$ has no neighbor of color 1, we have $z_5=v^{(5)}$, which can happen if and only if $z_{\dots 4} = \pathseq(v)_{\dots 4}$ and $c=c(v)$. This yields $z= \pathseq(v)$ and $c=c(v)$, as desired.
\end{itemize}
\end{proof}
The remainder of the reduction is analogous to the proof of Lemma~\ref{lem:dir-n2-undir-e2}: Given a Directed Node-2-CW instance $G$, $s,t\in V$ and color sequence $c_1, \dots, c_\ell$, we construct, in linear time in the output, an Undirected Node-2-CW instance $G'$ with source $s_\mathrm{out}$, target $t_\mathrm{out}$ and color sequence $1, \colors(c_1), 1, \colors(c_2), \dots, 1, \colors(c_\ell)$. For any walk $s=v_0, v_1, \dots, v_\ell =t$ with color sequence $c_1,\ldots,c_\ell$ in $G$, the walk $v_0, \pathseq(v_1), \pathseq(v_2), \dots, \pathseq(v_\ell)$ in $G'$ has the desired color sequence $1, \colors(c_1), 1, \colors(c_2), \dots, 1, \colors(c_\ell)$ by the above claim. Conversely, we see that for each color substring $1, \colors(c_i)$ any walk in $G'$ that starts in some $\uout$ is of the form $(\uout, \pathseq(v))$ for some $v\in V$ with $(u,v)\in E$ and $c(v)=c_i$, since the only 1-colored nodes adjacent to $\uout$ lead to some $\vin$ with $(u,v)\in E$ and the above claim proves that the walk $\pathseq(v)$ must follow, which requires $c(v)=c_i$. By repeated application of this fact, any walk in $G'$ with color sequence $1, \colors(c_1), 1, \colors(c_2), \dots, 1, \colors(c_\ell)$ corresponds to a walk in $G$ with color sequence $c_1,\dots, c_\ell$, as desired. Note that all parameters have increased by at most a constant factor, which yields the desired reduction.
\end{proof}

Now we combine the above relations to a full equivalence.
\begin{lemma} \label{lem:equivalences}
  The following problems are equivalent under $\{n,m,\ell\}$-preserving reductions:
  \begin{itemize}
    \item NFA Acceptance with $n^{o(1)}$ terminals,
    \item Directed Node-2-Colored Walk,
    \item Directed Node-$n$-Colored Walk,
    \item Directed Edge-2-Colored Walk,
    \item Directed Edge-$n^{o(1)}$-Colored Walk,
    \item Undirected Node-2-Colored Walk,
    \item Undirected Node-$n$-Colored Walk,
    \item Undirected Edge-2-Colored Walk,
    \item Undirected Edge-$n^{o(1)}$-Colored Walk.
  \end{itemize}
\end{lemma}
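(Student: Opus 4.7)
The plan is to assemble a strongly connected digraph of $\{n,m,\ell\}$-preserving reductions on the nine problems and invoke closure under composition. Almost all of the nontrivial work has already been done in Lemmas~\ref{lem:dir-n2-dir-e2}, \ref{lem:dir-eo1-nfa}, \ref{lem:nfa-dir-eo1}, \ref{lem:dir-nn-dir-n2}, \ref{lem:dir-n2-undir-e2}, and~\ref{lem:dir-n2-undir-n2}; what remains is to chain these reductions into a cycle that visits every problem on the list.

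First I would close a cycle through the five directed problems. Starting from Directed Node-2-Colored Walk, Lemma~\ref{lem:dir-n2-dir-e2} reduces to Directed Edge-2-Colored Walk; the trivial identity reduction that regards a 2-coloring as an $n^{o(1)}$-coloring reduces further to Directed Edge-$n^{o(1)}$-Colored Walk; Lemma~\ref{lem:dir-eo1-nfa} reduces to NFA Acceptance with $n^{o(1)}$ terminals; Lemma~\ref{lem:nfa-dir-eo1} reduces to Directed Node-$n^{o(1)}$-Colored Walk, which coincides with Directed Node-$n$-Colored Walk up to renaming of color values; and finally Lemma~\ref{lem:dir-nn-dir-n2} closes the loop by reducing back to Directed Node-2-Colored Walk. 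All five directed problems are therefore pairwise $\{n,m,\ell\}$-preservingly equivalent.

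Next I would attach the four undirected problems via two simple bridges in each direction. In one direction, the natural reduction orienting each undirected edge $\{u,v\}$ as the two directed edges $(u,v)$ and $(v,u)$ with unchanged color data preserves walks together with their color sequences, at most doubles~$m$, and leaves $n$ and $\ell$ unchanged; this yields an $\{n,m,\ell\}$-preserving reduction from each undirected variant to its directed counterpart of matching palette size. In the opposite direction, Lemmas~\ref{lem:dir-n2-undir-n2} and~\ref{lem:dir-n2-undir-e2} supply reductions from Directed Node-2-Colored Walk into Undirected Node-2-Colored Walk and Undirected Edge-2-Colored Walk, respectively, and the trivial color-palette inclusion then extends these to reductions into Undirected Node-$n$-Colored Walk and Undirected Edge-$n^{o(1)}$-Colored Walk.

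These pieces together form a strongly connected reduction digraph on all nine problems. Since $\{n,m,\ell\}$-preserving reductions compose (both the $N^{1+o(1)}$ running-time bound and the $p^{1+o(1)}$ parameter blow-up survive a constant number of compositions), every pair of problems is $\{n,m,\ell\}$-preservingly interreducible, proving the lemma. I do not foresee a real obstacle: the only routine point is to confirm that each reduction in the cycle is simultaneously parameter-preserving in $n$, $m$, and $\ell$, which is already explicit in each cited lemma and immediate for the orientation and palette-inclusion bridges.
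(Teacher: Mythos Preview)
Your proposal is correct and essentially identical to the paper's own proof: the paper closes the same directed cycle via Lemmas~\ref{lem:dir-n2-dir-e2}, \ref{lem:dir-eo1-nfa}, \ref{lem:nfa-dir-eo1}, \ref{lem:dir-nn-dir-n2} (with the same trivial palette-inclusion steps interleaved), and then attaches the undirected variants using the orient-both-ways bridge together with Lemmas~\ref{lem:dir-n2-undir-n2} and~\ref{lem:dir-n2-undir-e2}. The only cosmetic difference is that the paper writes out two small cycles for the undirected node and edge variants separately, whereas you phrase it as building one strongly connected digraph; the content is the same.
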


\begin{proof}
The parameter-preserving equivalence of all directed variants for parameters $P=\{n,m,\ell\}$ follows from the following chain of reductions: 
\begin{align*}
\text{Directed Node-2-CW} 	& \lePara \text{Directed Edge-2-CW} & & \text{(by \lemref{dir-n2-dir-e2})}\\
         			& \lePara \text{Directed Edge-$n^{o(1)}$-CW} & & \text{(trivially by definition)}\\
         			& \lePara \text{NFA Acceptance with $n^{o(1)}$ terminals} & & \text{(by \lemref{dir-eo1-nfa})},\\
			       	& \lePara \text{Directed Node-$n^{o(1)}$-CW}& & \text{(by \lemref{nfa-dir-eo1})},\\
				& \lePara \text{Directed Node-$n$-CW} & &  \text{(trivially by definition)},\\
			        & \lePara \text{Directed Node-2-CW} & &  \text{(by \lemref{dir-nn-dir-n2})}.
\end{align*}

Note that for any $C$, Undirected Node-$C$-CW $\lePara$ Directed Node-$C$-CW and Undirected Edge-$C$-CW $\lePara$ Directed Edge-$C$-CW follows trivially by replacing each undirected edge by directed edges in both directions, while reductions in the other direction (for $C=2$) are given by \lemrefs{dir-n2-undir-e2}{dir-n2-undir-n2}. We use these to also conclude equivalence to the undirected variants by
\begin{align*}
\text{Undirected Node-2-CW} 	& \lePara \text{Undirected Node-$n$-CW} & & \text{(trivially by definition)}\\
         			& \lePara \text{Directed Node-$n$-CW} & & \text{(trivially, as argued above)}\\
			       	& \lePara \text{Directed Node-2-CW}& & \text{(by already established equivalence)},\\
				& \lePara \text{Undirected Node-2-CW} & &  \text{(by \lemref{dir-n2-undir-n2})},
\end{align*}
and similarly,
\begin{align*}
\text{Undirected Edge-2-CW} 	& \lePara \text{Undirected Edge-$n^{o(1)}$-CW} & & \text{(trivially by definition)}\\
         			& \lePara \text{Directed Edge-$n^{o(1)}$-CW} & & \text{(trivially, as argued above)}\\
			       	& \lePara \text{Directed Node-2-CW}& & \text{(by already established equivalence)},\\
				& \lePara \text{Undirected Edge-2-CW} & &  \text{(by \lemref{dir-n2-undir-e2})}.
\end{align*}
\end{proof}

\begin{lemma} \label{lem:preservinghypo}
Let $\alpha \in [1,2], \beta > 0$ and $c \ge 1$. Let $X,Y$ be any problems listed in Lemma~\ref{lem:equivalences}. If problem~$X$ restricted to instances with $m = O(n^\alpha)$ and $\ell = O(n^\beta)$ has no $O(n^{c-\eps})$-time algorithm for any $\eps>0$, then the same holds for problem $Y$.
\end{lemma}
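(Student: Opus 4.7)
The plan is to instantiate the $\{n,m,\ell\}$-preserving reduction from $X$ to $Y$ guaranteed by \lemref{equivalences} and verify that it transports the stronger hardness statement restricted to $m = O(n^\alpha)$ and $\ell = O(n^\beta)$. Concretely, suppose for contradiction that some algorithm $\algo$ solves $Y$ in time $O(n^{c-\eps})$ on all instances with $m = O(n^\alpha)$ and $\ell = O(n^\beta)$, for some $\eps > 0$. I will turn $\algo$ into a matching algorithm for $X$ on the same restricted class of instances, contradicting the assumed hardness of $X$.

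Given an $X$-instance $I$ with parameters $n$, $m = O(n^\alpha)$, $\ell = O(n^\beta)$, apply the reduction to obtain an equivalent $Y$-instance $J$ with $n' \le n^{1+o(1)}$, $m' \le m^{1+o(1)} \le n^{\alpha(1+o(1))}$, and $\ell' \le \ell^{1+o(1)} \le n^{\beta(1+o(1))}$. The subpolynomial slack means that $J$ need not quite satisfy $m' = O((n')^\alpha)$ and $\ell' = O((n')^\beta)$; I resolve this by padding. Let $n'' := \max\{n',\, \lceil (m')^{1/\alpha}\rceil,\, \lceil (\ell')^{1/\beta}\rceil\}$ and add $n'' - n'$ isolated vertices to $J$. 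In each of the nine problem variants of \lemref{equivalences}, adding isolated vertices (or, for NFA Acceptance, isolated non-accepting states) is clearly answer-preserving, since such vertices/states cannot appear on any walk from $s$ to $t$ or on any accepting run. By construction the padded instance $J'$ satisfies $m' \le (n'')^\alpha$ and $\ell' \le (n'')^\beta$, hence lies in the parameter regime on which $\algo$ is assumed to work.

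Since each of $n'$, $(m')^{1/\alpha}$ and $(\ell')^{1/\beta}$ is bounded by $n^{1+o(1)}$, we have $n'' \le n^{1+o(1)}$. Therefore $\algo$ decides $J'$ in time $O((n'')^{c-\eps}) = O(n^{(c-\eps)(1+o(1))}) = O(n^{c-\eps+o(1)})$. Combined with the almost-linear-time reduction, the whole procedure decides $I$ in time $O(n^{c-\eps/2})$ for all sufficiently large $n$, contradicting the assumed hardness of $X$ on instances with $m = O(n^\alpha)$ and $\ell = O(n^\beta)$.

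The only delicate step is the padding argument, which requires (i) checking that isolated vertices are genuinely neutral in all nine variants of \lemref{equivalences}, and (ii) absorbing the $n^{o(1)}$ slack introduced by the reduction into a loss of $\eps/2$ in the final exponent. Both checks are routine: padding is trivial in each variant because the source $s$ cannot reach isolated vertices, and the exponent loss is the standard fact that $(c-\eps)(1+o(1)) < c - \eps/2$ for all large enough $n$. Beyond these, the lemma is simply a quantitative corollary of the parameter-preserving equivalence already established.
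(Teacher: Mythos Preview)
Your proof is correct and follows exactly the paper's approach: assume a fast algorithm for $Y$, apply the $\{n,m,\ell\}$-preserving reduction, pad with isolated vertices by setting the new vertex count to $\max\{n',(m')^{1/\alpha},(\ell')^{1/\beta}\}$, and absorb the $n^{o(1)}$ slack into a loss of $\eps/2$. The only point the paper handles more explicitly is the running time of the reduction itself, which is $(n+m+\ell)^{1+o(1)} = n^{\max\{\alpha,\beta\}+o(1)}$ and hence dominated by $n^{c-\eps/2}$ only when $c > \max\{\alpha,\beta\}$; the paper disposes of the remaining case by observing that the conclusion is then trivial.
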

\begin{proof}
Suppose for the sake of contradiction that problem $Y$ restricted to instances with $m = O(n^\alpha)$ and $\ell = O(n^\beta)$ has an algorithm $\algo$ running in time $O(n^{c-\eps})$ for some $\eps > 0$. Then we can solve problem $X$ as follows. Given an instance $I$ of problem $X$ with parameters $n,m,\ell$ satisfying $m = O(n^\alpha), \ell = O(n^\beta)$, run the $\{n,m,\ell\}$-preserving reduction to obtain an equivalent instance $I_0$ of $Y$ with parameters $n_0, m_0, \ell_0$. Recall that $n_0 \le n^{1+o(1)}, m_0 \le m^{1+o(1)}, \ell_0 \le \ell^{1+o(1)}$.
We set $n_1 := \max\{n_0, m_0^{1/\alpha}, \ell_0^{1/\beta}\}$. We add $n_1-n_0$ isolated nodes/states to $I_0$, resulting in an instance $I_1$ with parameters $n_1, m_1 = m_0, \ell_1 = \ell_0$. Note that $m_1 \le n_1^\alpha$ and $\ell_1 \le n_1^\beta$, so we can run algorithm $\algo$ on instance $I_1$. Since $I_1$ is equivalent to $I_0$, and thus equivalent to $I$, this solves the given instance $I$ of problem $X$. 

Tracing the above inequalities, we observe that $n_1 \le n^{1+o(1)}$, so the running time of algorithm~$\algo$ on instance $I_1$ is $O(n_1^{c-\eps}) = n^{c-\eps+o(1)}$. The reduction runs in almost-linear time in the input size, i.e., in time $(n+m+\ell)^{1+o(1)} = n^{\max\{\alpha,\beta\}+o(1)}$. Hence, the total time to solve instance $I$ is $O(n^{c-\eps+o(1)} + n^{\max\{\alpha,\beta\}+o(1)})$. If $c \le \max\{\alpha,\beta\}$ then the conclusion that $Y$ has no $O(n^{c-\eps})$-time algorithm is trivial, so we can assume $c > \max\{\alpha,\beta\}$.
After possibly replacing $\eps$ by $\min\{\eps, c - \max\{\alpha,\beta\}\}$, we can further bound the running time by $O(n^{c-\eps+o(1)})$.
Bounding $n^{o(1)} \le O(n^{\eps/2})$, the running time becomes $O(n^{c-\eps/2}) = O(n^{c-\eps'})$ for $\eps' := \eps/2$. This contradicts the assumption that problem $X$ restricted to instances with $m = O(n^\alpha), \ell = O(n^\beta)$ has no $O(n^{c-\eps''})$-time algorithm for any $\eps''>0$. 
\end{proof}

Finally we are ready to prove \lemref{equivalencehypo}.

\begin{proof}[Proof of \lemref{equivalencehypo}]
Let $X,Y$ be any problems listed in the lemma statement. If the lemma claim corresponding to $X$ holds, then in particular problem $X$ restricted to instances with $m = O(n^\alpha)$ and $\ell = O(n^\beta)$ has no $O(n^{\alpha+\beta-\eps})$-time algorithm for any $\eps > 0$. Using the $\{n,m,\ell\}$-preserving reduction from $X$ to $Y$ guaranteed by \lemref{equivalences}, by \lemref{preservinghypo} we obtain the same result for $Y$, i.e., problem~$Y$ restricted to instances with $m = O(n^\alpha)$ and $\ell = O(n^\beta)$ has no $O(n^{\alpha+\beta-\eps})$-time algorithm for any $\eps > 0$. To arrive at the lemma claim corresponding to $Y$, we need to additionally ensure $m = \Omega(n^\alpha)$ (and for some $Y$ also $\ell = \Omega(n^\beta)$), which we achieve by padding, as follows.

To pad the parameter $m$, we add $n$ dummy nodes (or states) and among them we add $\Theta(n^\alpha)$ dummy edges (or transitions), to ensure $m = \Theta(n^\alpha)$. Since these new nodes and edges are disconnected from the old graph, they do not change the result.

For the directed variants of Colored Walk we can also pad the length $\ell$: We add new nodes $s_0,s_1$ and edges $(s_0,s_1)$ and $(s_1,s_0)$ of color 1 and an edge $(s_1,s)$ of color 2, and we change the color sequence to $1^{2 \lceil n^\beta \rceil} \, 2 \, c_1 \ldots c_\ell$. Observe that there is a walk from $s_0$ to $t$ with the new color sequence in the new graph if and only if there is a walk from $s$ to $t$ with the old color sequence $c_1 \ldots c_\ell$ in the old graph. An analogous construction can be used to pad the number of transitions in a given NFA. Therefore, we can assume $\ell = \Theta(n^\beta)$ for the directed variants of Colored Walk as well as for NFA Acceptance.

Since these constructions change $n$ only by a constant factor, we still rule out a running time of $O(n^{\alpha+\beta-\eps})$. This finishes the proof.
\end{proof}

Finally, we also show that Colored Walk is equivalent to a formulation without given source and target nodes $s,t\in V$, specifically, the variant in which we are looking for \emph{any} walk in $G$ with the prescribed color sequence $c_1,\dots, c_\ell$. We call these formulations \textsc{Directed/Undirected $\Sigma$-Edge-Colored/$\Sigma$-Node-Colored AnyWalk}.

Below, we prove a $\{n,m,\ell\}$-preserving equivalence of Undirected Edge-Colored Walk and Undirected Edge-Colored AnyWalk (up to an additive constant in the alphabet size); the proof of equivalence for all directed and/or node-colored variants is analogous (and in fact simpler).

\begin{lemma}[Equivalence of $s$-$t$-version and any-walk version]\label{lem:equivalence-to-anywalk} The following relationships hold:
\begin{enumerate}
	\item Undirected Edge-2-Colored Walk {\lePara[\{n,m,\ell\}]} Undirected Edge-4-Colored AnyWalk, \label{item:walk-to-anywalk}
	\item Undirected Edge-$\sigma$-Colored AnyWalk {\lePara[\{n,m,\ell\}]} Undirected Edge-$(\sigma+1)$-Colored Walk. \label{item:anywalk-to-walk}
\end{enumerate}	
\end{lemma}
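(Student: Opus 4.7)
The plan is to prove the two reductions by short gadget constructions that change all three parameters only by small additive constants (for part~\ref{item:walk-to-anywalk}) or by an $O(1)$ multiplicative factor after a trivial preprocessing step (for part~\ref{item:anywalk-to-walk}).

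For part~\ref{item:walk-to-anywalk}, given an Undirected Edge-$2$-Colored Walk instance with source $s$, target $t$ and color sequence $c_1, \ldots, c_\ell$, I will attach a pendant vertex $s'$ to $s$ by a single new edge of color $3$ and a pendant vertex $t'$ to $t$ by a single new edge of color $4$, and output the AnyWalk instance with color sequence $3, c_1, \ldots, c_\ell, 4$. Since color $3$ appears only on the edge $\{s', s\}$ and $s'$ is a pendant, any walk realising this sequence is forced to start at $s'$ and immediately step to $s$: starting at $s$ and moving to $s'$ would require a subsequent color in $\{1,2\}$ out of $s'$, which does not exist. Symmetrically the walk is forced to end at $t'$ via $t$. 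The middle $\ell$ steps take place entirely in $G$ and realise exactly the desired $s$-$t$-walk, and conversely any such walk lifts trivially. Using two distinct fresh colors for the two sides is essential: a single fresh color would allow a false positive coming from an $s$-to-$s$ walk realising $c_1, \ldots, c_\ell$ in $G$.

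For part~\ref{item:anywalk-to-walk}, given an Undirected Edge-$\sigma$-Colored AnyWalk instance, I will introduce a single hub vertex $x$ and connect it to every vertex of $G$ by an edge of the fresh color $\sigma + 1$. The output Walk instance has source and target $s = t = x$ and color sequence $\sigma + 1, c_1, \ldots, c_\ell, \sigma + 1$. Any walk in the new graph realising this sequence must leave $x$ via a color-$(\sigma+1)$ edge into some $v_0 \in V$, perform its $\ell$ middle steps entirely inside $G$ (since the new edges are the only ones with color $\sigma+1$, and this color does not occur in $c_1, \ldots, c_\ell$), and return to $x$ from some $v_\ell \in V$; this yields precisely the required AnyWalk $v_0, \ldots, v_\ell$ in $G$. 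The converse direction is immediate.

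The only step requiring care is parameter preservation in part~\ref{item:anywalk-to-walk}, where the number of edges grows by $n$. This is harmless when $m = \Omega(n)$ but could violate $m' \le m^{1+o(1)}$ when $m$ is very small relative to $n$. I will handle this by first deleting all isolated vertices of $G$, which cannot appear on any walk of positive length; afterwards every vertex has degree at least one, so $n \le 2m$ and hence $m' = m + n \le 3m$. The degenerate cases $m = 0$ or $\ell = 0$ are decided directly. Part~\ref{item:walk-to-anywalk} needs no such preprocessing since it only adds two vertices, two edges and two positions to the color sequence. I do not anticipate any deeper difficulty beyond this routine bookkeeping.
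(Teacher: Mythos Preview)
Your proof is correct and follows essentially the same approach as the paper: identical pendant-vertex construction for part~\ref{item:walk-to-anywalk}, and the same hub-vertex idea for part~\ref{item:anywalk-to-walk} (the paper uses two hubs $s',t'$ while you use a single hub $x$ with $s=t=x$, which is a harmless variation). Your treatment of isolated vertices to guarantee $n\le 2m$ before adding the hub edges is in fact more careful than the paper, which simply asserts ``$O(n)=O(m)$ edges'' without justification.
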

\begin{proof}
	For~\ref{item:walk-to-anywalk}, consider any Undirected Edge-2-Colored Walk instance $G=(V,E)$, $s,t\in V$ and $c_1,\dots, c_\ell \in \{1,2\}$. We construct a graph $G'$ from $G$ by adding two nodes $s',t'$, connecting $s'$ to $s$ by an edge labeled 3 and connecting $t$ to $t'$ by an edge labeled 4. Note that any walk in $G'$ with color sequence $3,c_1,\dots,c_\ell,4$ must start in $s'$, transition to $s$, use a walk from $s$ to $t$ in $G$ with color sequence $c_1,\dots, c_\ell$, and finally transition to $t'$. Thus, $G'$ with color sequence $3,c_1,\dots,c_\ell,4$ yields an equivalent Undirected 4-Edge-Colored AnyWalk instance. The reduction is clearly $\{n,m,\ell\}$-preserving.
	
	For~\ref{item:anywalk-to-walk}, consider any Undirected Edge-2-Colored AnyWalk instance $G=(V,E)$, $c_1,\dots, c_\ell \in [\sigma]$. We construct $G'$ from $G$ by adding two nodes $s',t'\in V$, connecting $s'$ to each node $v\in V$ by an edge labeled $\sigma + 1$, as well as connecting each node $v\in V$ to $t'$ by an edge labeled $\sigma + 1$. Note that there is a walk from $s'$ to $t'$ in $G$ with color sequence $(\sigma+1),c_1,\dots, c_\ell, (\sigma+1)$ if and only if there is any walk in $G$ with color sequence $c_1,\dots, c_\ell$. Thus, we have obtained an equivalent Undirected Edge-$(\sigma+1)$-Colored Walk instance. Since we have added only two nodes, $O(n)=O(m)$ edges and two colors to the color sequence, the reduction is $\{n,m,\ell\}$-preserving.
\end{proof}

We obtain in particular that the NFA Acceptance hypothesis is equivalent to the $(m\ell)^{1-o(1)}$ barrier for Directed/Undirected Edge-4-Colored/Node-4-Colored AnyWalk.

\section{Relation to Other Hypotheses}
\label{sec:otherHypos}

For the reader's convenience, we restate the known connections between the NFA Acceptance hypothesis and standard fine-grained hypotheses, as described in Section~\ref{sec:intro}. 

\subsection{Tight Lower Bound for Sparse NFAs Under SETH}

In the Orthogonal Vectors (OV) problem, we are given vector sets $A, B\subseteq \{0,1\}^d$, and the task is to determine whether there is an \emph{orthogonal pair} $(a,b)\in A \times B$, i.e., for all $1 \le k\le d$ we have $a[k] = 0$ or $b[k]=0$.

\begin{hypothesis}[OV Hypothesis]
	For no $\epsilon > 0$ and $\beta > 0$, there is an algorithm solving OV with sets of size $n \coloneqq  |A|$ and $\ell \coloneqq |B| =\Theta(n^\beta)$ and $d\le n^{o(1)}$ dimensions in time $O((n\ell)^{1-\epsilon}) = O(n^{(1+\beta)(1-\epsilon)})$.
\end{hypothesis}
This hypothesis is usually stated with sets of the same size, but is equivalent to the above version, see, e.g.,~\cite[Lemma II.1]{BringmannK15}. The hypothesis is well known to be implied by the Strong Exponential Time Hypothesis (SETH)~\cite{Williams05}.

The following conditional lower bound can be found for different versions of the Colored Walk problem throughout the algorithmic literature, including~\cite{BackursI16,BringmannGL17, EquiMTG23, PotechinS20, CaselS23} -- recall that these turn out to be fine-grained equivalent due to Lemma~\ref{lem:equivalencehypo}.

\begin{proposition}\label{prop:sparseLB}
If there are $\epsilon,\beta > 0$ such that NFA Acceptance on an $n$-state NFA with $O(n)$ transitions and a string of length $\ell=\Theta(n^\beta)$ over $\Sigma=\{0,1,2\}$ can be solved in time $O( (\ell n)^{1-\epsilon})$, then the OV Hypothesis fails.
\end{proposition}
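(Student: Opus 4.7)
The plan is to reduce from the Orthogonal Vectors problem. Given an OV instance with sets $A,B \subseteq \{0,1\}^d$, $|A|=n$, $|B|=\ell=\Theta(n^\beta)$, and $d = n^{o(1)}$, I will build an NFA $M$ over $\Sigma = \{0,1,\#\}$ (identifying $\#$ with $2$) together with a string $x$, such that $M$ accepts $x$ if and only if there is an orthogonal pair $(a,b)\in A\times B$.

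The string is simply $x = \# b_1 \# b_2 \# \cdots \# b_\ell \#$, viewing each $b_i$ as a length-$d$ bitstring, so $|x|=\ell(d+1)+1$. The NFA has a start state $q_0$ and an accept state $q_f$, each equipped with self-loops on all three characters so that $q_0$ can skip an arbitrary prefix and $q_f$ can absorb an arbitrary suffix. For each $a\in A$ I add a linear gadget of $d+1$ states $(a,0),\ldots,(a,d)$, with entry transition $q_0\to (a,0)$ on $\#$ and exit transition $(a,d)\to q_f$ on $\#$; inside the gadget the transition $(a,k-1)\to(a,k)$ is labeled with $0$ always, and additionally with $1$ precisely when $a[k]=0$. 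Any accepting run must, at some separator $\#$ preceding some $b_i$, enter some gadget $a$, traverse it while reading $b_i$, and then exit to $q_f$ on the following $\#$. Such a traversal is possible if and only if $a[k]=0$ whenever $b_i[k]=1$, i.e.\ iff $\langle a,b_i\rangle=0$, which gives correctness of the reduction.

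The NFA $M$ has $n(d+1)+2 = n^{1+o(1)}$ states and $O(nd)=n^{1+o(1)}$ transitions, i.e.\ a linear number of transitions in the state count, and $|x| = \ell\cdot n^{o(1)} = n^{\beta + o(1)}$. An algorithm for NFA Acceptance running in time $O((N L)^{1-\eps})$ (where $N$ is the state count and $L$ the string length) would then solve the OV instance in time $O\bigl((n\ell)^{(1+o(1))(1-\eps)}\bigr)$, which for sufficiently large $n$ contradicts the OV Hypothesis after replacing $\eps$ by, say, $\eps/2$. The reduction is standard and conceptually routine; the only care needed is bookkeeping to ensure that (i) the alphabet stays at three symbols, (ii) the number of transitions is linear in the state count so that the hypothesized $O((\ell n)^{1-\eps})$ bound (rather than $O((\ell m)^{1-\eps})$) is what gets refuted, and (iii) the $n^{o(1)}$ blow-ups in both parameters are absorbed into the final exponent.
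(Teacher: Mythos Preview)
Your proposal is correct and essentially identical to the paper's own proof: the same string $\#\,b_1\,\#\,\cdots\,\#\,b_\ell\,\#$, the same per-vector linear gadgets with a $0$-transition always and a $1$-transition only when the corresponding coordinate of $a$ is $0$, and the same self-looping start and accept states. The paper phrases the final parameter accounting via its ``$\{n,\ell\}$-preserving reduction'' formalism, whereas you do it directly by absorbing the $n^{o(1)}$ factors into a halved $\eps$, but this is purely a packaging difference.
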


Note that this establishes a tight conditional lower bound for NFAs that are \emph{sparse}. 

\begin{proof}
We give a $\{n,\ell\}$-preserving reduction from OV to NFA Acceptance. It is not difficult to see that the claimed lower bound follows, analogously to Lemma~\ref{lem:preservinghypo}.

Given an OV instance $A,B\subseteq\{0,1\}^d$ with $n=|A|$, $\ell=|B|$, $d=n^{o(1)}$, we construct an NFA $M$ over $\Sigma=\{0,1,2\}$ as follows: $M$ has initial state $s$, accepting state $t$. We write $A=\{a_1,\dots, a_n\}$ and introduce, for every $1\le i \le n$, the states $q^{(i)}_0,\dots, q^{(i)}_d$. For every $1\le i \le n$ and $1\le k\le d$, we add the transitions $(q^{(i)}_{k-1},0,q^{(i)}_k)$ and, if $a_i[k]=0$, the transition $(q^{(i)}_{k-1},1,q^{(i)}_k)$.
Concluding the construction of $M$, we add transitions $(s,2,q^{(i)}_0)$ and $(q^{(i)}_d,2,t)$ for all $1\le i\le n$ and the loop transitions $(q,0,q),(q,1,q),(q,2,q)$ for each $q\in \{s,t\}$.

To construct the string $S$, we write $B=\{b_1,\dots, b_\ell\}$ and define $S \coloneqq 2 \; b_1 \; 2 \; \dots \; 2 \; b_\ell \; 2$, where we interpret $b_i$ as a length-$d$ string over $\{0,1\}$.

We argue that $M$ accepts $S$ if and only if $A,B$ contain an orthogonal pair: By the structure of $M$ and $S$, the only accepting runs of $M$ have the following form. $M$ reads a prefix $2 b_1 2 \dots b_{j-1}$ for some $1\le j \le \ell$ while staying in $s$, then uses the transition sequence 
\begin{equation} \label{eq:transitions}
s \stackrel{2}{\to} q^{(i)}_0 \stackrel{b_j[1]}{\to} q^{(i)}_1 \stackrel{b_j[2]}{\to}\cdots \stackrel{b_j[d]}{\to} q^{(i)}_d \stackrel{2}{\to} t, \end{equation}
for some $1\le i \le n$, and then stays in $t$ while reading the suffix $b_{j+1} 2 \dots b_\ell 2$. By observing that the transition sequence~\eqref{eq:transitions} exists in $M$ if and only if $a_i, b_j$ are orthogonal, the claim follows.

Finally, we note that this indeed yields a $\{n,\ell\}$-preserving reduction: $M$ has $O(nd)=n^{1+o(1)}$ states, and $S$ has length $O(\ell d)=\ell^{1+o(1)}$, and they can be computed in linear time in the input.
\end{proof}

\subsection{Tight Combinatorial Lower Bound from Triangle and \texorpdfstring{\boldmath{$k$}-Clique}{k-Clique}}

In the $k$-Clique problem, we are given an undirected, simple graph $G=(V,E)$, and the task is to determine whether there is a set $C \subseteq V$ of $k$ nodes that are pairwise adjacent, i.e., $\{u,v\}\in E$ for all $u,v\in C$.
The current state of the art for $k$-Clique detection is captured in the following hypothesis, see~\cite{NesetrilP85,AbboudBVW18}.
\begin{hypothesis}[$k$-Clique hypothesis~\cite{AbboudBVW18}]
	\begin{enumerate}
	\item[] 
	\item \emph{Combinatorial version:} For no $k \ge 3$ and $\epsilon > 0$, there is a \emph{combinatorial} algorithm solving $k$-Clique in time $O(n^{k(1-\epsilon)})$.
	\item \emph{Non-combinatorial version:} For no $k \ge 3$ and $\epsilon > 0$, there is an algorithm solving $k$-Clique in time $O(n^{k(\omega/3-\epsilon)})$.
	\end{enumerate}
\end{hypothesis}

One can show the following $k$-Clique-based lower bounds, which in particular give a tight combinatorial lower bound for \emph{dense} NFAs. By \lemref{dense-is-hardest}, this yields a tight combinatorial lower bound for all graph densities.

\begin{proposition} \label{prop:cliqueLB} NFA Acceptance has the following $k$-Clique-based lower bounds:
	\begin{enumerate}
		\item \emph{Combinatorial algorithms:} Unless the combinatorial $k$-Clique hypothesis fails, there are no $\epsilon,\beta > 0$ such that there is an $O( (\ell n^2)^{1-\epsilon})$-time combinatorial algorithm for NFA Acceptance on an $n$-state NFA with $\Theta(n^2)$ transitions and a string of length $\ell=\Theta(n^\beta)$ over $\Sigma=\{0,1,2,3\}$.
		\item \emph{General algorithms:} Unless the $k$-Clique hypothesis fails, there are no $\epsilon,\beta > 0$ such that there is an $O( (\ell n^2)^{\omega/3-\epsilon})$-time algorithm for NFA Acceptance on an $n$-state NFA with $\Theta(n^2)$ transitions and a string of length $\ell=\Theta(n^\beta)$ over $\Sigma=\{0,1,2,3\}$.
	\end{enumerate}  
\end{proposition}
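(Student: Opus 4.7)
The plan is to reduce $k$-Clique to NFA Acceptance with tunable parameters; by \lemref{equivalencehypo}, it suffices to reduce to Directed Edge-$O(1)$-Colored Walk. For any fixed $\beta, \eps > 0$, I choose a sufficiently large integer $r$ and set $k := (2+\beta)r$ (for rational $\beta$; approximate integer parameters for general $\beta$). From a $k$-Clique instance on an $n$-vertex graph $G$, the reduction produces a Colored Walk instance with $N = n^{r+o(1)}$ nodes, $\Theta(N^2)$ edges, and a color sequence of length $\ell = N^{\beta+o(1)}$ over the alphabet $\{0,1,2,3\}$, such that the instance is a YES-instance iff $G$ contains a $k$-clique. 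A hypothetical $O((\ell N^2)^{1-\eps})$-time combinatorial algorithm would then decide $k$-Clique in time $n^{k(1-\eps)+o(1)}$, which for $r$ large enough contradicts the combinatorial $k$-Clique hypothesis; the $\omega/3 - \eps$ variant follows identically.

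The core reduction is a parameterized Nesetril-Poljak super-triangle construction: partition $V(G) = V_1 \cup V_2 \cup V_3$ into appropriately sized parts, and form the tripartite super-graph $H$ whose side-$i$ super-vertices are the $a_i$-cliques in $V_i$ (with $a_1 := r\beta$ and $a_2 = a_3 := r$, so $a_1 + a_2 + a_3 = k$) and whose super-edges between distinct sides encode that the union of the two super-vertices is a clique in $G$. Then $G$ contains a $k$-clique iff $H$ contains a super-triangle, and the super-side sizes are $|\mathcal{A}| \le n^{r\beta} = N^\beta$ and $|\mathcal{B}| = |\mathcal{C}| \le n^r = N$.

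To realize super-triangle detection as Colored Walk, build a graph $G'$ on nodes $\{s', t'\} \cup \mathcal{B} \cup \mathcal{C}$ together with a shared parser gadget of $n^{o(1)}$ size and a \emph{seeking/found} mode bit. Transitions between $\mathcal{B}$ and $\mathcal{C}$ mirror the super-edges, contributing up to $N^2$ transitions. The color sequence consists of $|\mathcal{A}| = N^\beta$ blocks of $n^{o(1)}$ length, one per $C_1 \in \mathcal{A}$: a block lists the $a_1$ vertices comprising $C_1$ in binary with separator symbols over the $4$-letter alphabet. Within the block for $C_1$, the walk may either \emph{bypass} (consuming the block as a no-op and staying in seeking mode) or \emph{commit}: navigate from an entry checkpoint through the parser that verifies the current $C_2 \in \mathcal{B}$ is compatible with all vertices of $C_1$, traverse a super-edge to some $C_3 \in \mathcal{C}$, verify $C_3$-$C_1$ compatibility via a second parser pass, and reach a post-commit checkpoint in found mode. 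A walk from $s'$ to $t'$ consuming the full color sequence exists iff some $C_1$ admits a compatible $(C_2, C_3)$ super-triangle, iff $G$ contains a $k$-clique.

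The main obstacle is keeping transitions within $O(N^2)$ while gating compatibility by $C_1 \in \mathcal{A}$, whose naive $(C_1, C_2)$ and $(C_1, C_3)$ compatibility tables have size $N^{\beta+1}$ each --- exceeding $N^2$ whenever $\beta > 1$. The resolution is a bit-level precomputation: for each $v \in V_1$ and $C_2 \in \mathcal{B}$, store a single bit $b(v, C_2) \in \{0,1\}$ indicating whether $v$ is adjacent in $G$ to every vertex of $C_2$ (analogously for $\mathcal{C}$). This yields only $O(n \cdot N)$ gating transitions in the parser, subdominant to the $N^2$ super-edge transitions for $r \ge 1$. The shared parser aggregates the per-vertex bits as the $a_1 \log n$ bits of $C_1$'s representation are read, so the commit path is enabled iff $C_2, C_3$ are both compatible with the full $C_1$. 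All $n^{o(1)}$ factors in state count, color sequence length, and alphabet size are absorbed by taking $r$ sufficiently large in terms of $\beta$ and $\eps$.
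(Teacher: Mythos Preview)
Your high-level strategy matches the paper's: split a large clique into three parts, encode one part (of size $\approx \beta r$) into the color sequence and the other two (each of size $r$) into the state space, then verify pairwise compatibility via gadgets. The paper writes this as a reduction from $(2k+k')$-Clique with $k' = \lfloor \beta k\rfloor$, which is the same split up to relabeling.

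However, your parser construction has a gap. A \emph{shared} parser of $n^{o(1)}$ states cannot verify that a fixed $C_2 \in \mathcal{B}$ is compatible with all of $C_1$: once the walk leaves $C_2$ and enters the parser to read some vertex $v_j$ of $C_1$, the parser has no memory of which of the $N$ super-vertices it came from. When it finishes and returns along a gating transition witnessing $b(v_j, \cdot) = 1$, it may land at some $C_2' \ne C_2$; successive vertices of $C_1$ would then get checked against different super-vertices, making the reduction unsound in the $\Leftarrow$ direction. (Your transition accounting already hints at an inconsistency: a parser with $n^{o(1)}$ nodes adjacent to $N$ nodes over a $4$-letter alphabet supports at most $O(N \cdot n^{o(1)})$ transitions between them, not the $O(n \cdot N)$ you budget.)

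The paper avoids this by replicating the checker per super-vertex: for each $k$-clique $\{u_1,\ldots,u_k\}$ it builds a dedicated gadget $CG(u_1,\ldots,u_k)$ of size $\tilde{O}(n)$ that accepts precisely the encodings of $k'$-cliques compatible with it, and only then wires the $3$-labeled transitions between $CG$ and $CG'$ gadgets to encode the $2k$-clique super-edges. This inflates the state count from $n^{k}$ to $\tilde{O}(n^{k+1})$ (bounded as $O(n^{k+2})$ in the paper), but the extra polynomial factor is harmless: one takes $k$ (your $r$) large enough relative to $\epsilon,\beta$ so that it is swallowed by the $(1-\epsilon)$ speedup, via the same mechanism you already invoke for the $n^{o(1)}$ overheads.
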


The above lower bound is implicit in the literature.
Specifically, it can be obtained either by adapting the reduction from weighted $k$-Clique in~\cite{BackursT17} to the unweighted case, or as a special case of~\cite[Theorem I.5]{AbboudBBK17}\footnote{In the theorem, simply choose $\alpha_N=\alpha_n=\beta$.}, or as an appropriate generalization of the reduction from Triangle Detection to NFA Acceptance with $\ell \approx n$ in~\cite{PotechinS20}, or as an appropriate generalization of the reduction from Triangle Detection to regular path queries with queries of length $\ell\approx n$ in~\cite{CaselS23}.

For the reader's convenience, we present a simplified proof of this fact.

\begin{proof}[Proof of Proposition~\ref{prop:cliqueLB}]
	We prove the conditional lower bound for combinatorial algorithms. The claim for general algorithms follows from the same reduction.
	
	Assume for the sake of contradiction that there are $\epsilon > 0$ and $\beta >0$ such that NFA Acceptance on a $q$-state NFA with $\Theta(q^2)$ transitions and a string $S$ of length $\ell=\Theta(q^\beta)$ over $\Sigma=\{0,1,2,3\}$ can be solved in time $O((\ell q^2)^{1-\epsilon})=O(q^{(2+\beta)(1-\epsilon)})$.
	
	We reduce from $(2k+k')$-clique where $k> (5+2\beta)/\epsilon$ and $k'=\lfloor \beta k \rfloor$. We let $V=\{0,\dots, n-1\}$, so that we can write $v$ in binary as a $\lceil \log n \rceil$-length bit string, which we call the \emph{node ID} of $v$.
	
	For any $t$, let $\clique(t)$ denote the set of $t$-cliques in $G$.
	We construct the string $S$ as follows:
	\[ S := 2 \circ \bigcirc_{\{v_1,\dots, v_{k'}\}\in \clique(k')} \left( v_1^k\;  v_2^k \; \dots \; v_{k'}^k  \; 3 \; v_1^k\;  v_2^k \; \dots \; v_{k'}^k  \; 2\right),\]
	where $\circ$ denotes concatenation and $v^k$ for $v\in V$ denotes the $k$-fold concatenation of $v$'s node ID.
	
	We turn to constructing the NFA $M$. First, for every $\{u_1,\dots,u_k\}\in \clique(k)$, we construct a clique gadget $CG(u_1,\dots, u_k)$. This gadget will accept a string  of the form $v_1^k\;  v_2^k \; \dots \; v_{k'}^k$ for $\{v_1,\dots, v_{k'}\}\in \clique(k')$ if and only if $\{u_1,\dots, u_k, v_1,\dots, v_{k'}\}$ form a $(k+k')$-clique. We can construct such an NFA with $\tOh(n)$ states and transitions as follows: For any node $u\in V$, it is straightforward to construct an NFA $N(u)$ of size $\tOh(n)$ accepting only the node IDs of vertices $v\in V$ with $\{u,v\}\in E$. We simply introduce a parallel path from starting to accepting state for every neighbor $v$ of $u$. Since every node has at most $n$ neighbors and each ID is of length $O(\log n)$, the size of $N(u)$ is $\tOh(n)$.
	We can now construct $CG(u_1,\dots, u_k)$ by connecting in series the following sequence of NFAs:
	\[ \underbrace{(N(u_1) \; \dots \; N(u_k))\;  \dots \; (N(u_1)\; \dots\; N(u_k))}_{k' \text{ times}}.  \]
	Here, by connecting in series, we mean identifying the single accepting state of each NFA in the sequence with the starting state of the subsequent NFA. Note that the constructed NFA accepts $v_1^k\;  v_2^k \; \dots \; v_{k'}^k$ if and only if $\{u_i,v_j\}\in E$ for every $1\le i \le k$ and $1\le j \le k'$, yielding the desired property. Furthermore, it has $(k+k')\cdot \tOh(n) =\tOh(n)$ states and transitions. 
	
	For each constructed clique gadget $CG(u_1,\dots, u_k)$ with $\{u_1,\dots, u_k\}\in \clique(k)$, we construct a copy $CG'(u_1,\dots, u_k)$. We add transitions from the starting state $s$ of $M$ to the starting state of $CG(u_1,\dots, u_k)$, labeled $2$, and similarly transitions from each $CG'(u_1,\dots, u_k)$ to the accepting state $t$ of $M$, labeled $2$. We also add loop transitions $(q,\sigma,q)$ for each $q\in \{s,t\}$ and $\sigma \in \{0,1,2,3\}$. Finally, for every $\{u_1,\dots, u_k\},\{u'_1,\dots, u'_k\}\in \clique(k)$ such that $\{u_1,\dots, u_k, u'_1,\dots, u'_k\}$ form a $2k$-clique in $G$, we add a transition from the accepting state of $CG(u_1,\dots, u_k)$ to the starting state of $CG'(u'_1,\dots, u'_k)$, labeled 3.
	
	By the structure of $M$ and $S$, the only accepting runs have the following form: $M$ reads a prefix of $S$, branches on start of some substring $y= 2 v_1^k\dots, v_{k'}^k 3 v_1^k\dots, v_{k'}^k 2$ with $\{v_1,\dots, v_{k'}\}\in \clique(k')$ to some $CG(u_1,\dots,u_k)$ and some $CG'(u'_1,\dots,u'_k)$, followed by reading the remaining suffix of $S$ in the accepting state $t$. This run exists if and only if \begin{itemize}
		\item 	$\{v_1,\dots, v_{k'}, u_1,\dots, u_k\}$ forms a $(k+k')$-clique (so that the prefix of $y$ can be traversed with $CG(u_1,\dots, u_k)$),
	\item  $\{u_1,\dots, u_k,u'_1,\dots, u'_k\}$ forms a $2k$-clique (so that there is a transition from $CG(u_1,\dots, u_k)$ to $CG'(u'_1,\dots, u'_k)$), and
	\item $\{v_1,\dots, v_{k'}, u'_1,\dots, u'_k\}$ forms a $(k+k')$-clique (so that the suffix of $y$ can be traversed with $CG'(u'_1,\dots, u'_k)$).
\end{itemize} 
This is equivalent to $\{v_1,\dots, v_{k'},u_1,\dots, u_k, u'_1, \dots, u'_k\}$ forming a $(2k+k')$-clique.

Thus, we have created an NFA Acceptance instance that is equivalent to the given $(2k+k')$-Clique instance. Note that the constructed string $S$ has length $\tOh(n^{k'})$, that $M$ has $O(n^{k+2})$ states, since it consists of $O(n^k)$ clique gadgets of size $\tOh(n)=O(n^2)$, and that the instance can be constructed in linear time in its size.

Note that by our choice of $k'=\lfloor \beta k\rfloor \le \beta k$, we have $|S| \le \tOh(n^{k'}) \le O((n^{k+2})^\beta)$. Thus, by appending additional 2's to $S$ and isolated states to $M$, we can ensure that $|S| = \Theta(q^\beta)$ where $q=\Theta(n^{k+2})$ is the number of states of $M$. By adding transitions between the newly added isolated states we can also ensure that the number of transitions is $\Theta(q^2)$.

Thus, our $O(q^{(2+\beta)(1-\epsilon)})$-time combinatorial algorithm for NFA Acceptance on $q$-state NFAs with $\Theta(q^2)$ transitions and strings of length $\Theta(q^\beta)$ solves any given $k$-Clique instance in time $O(n^{(k+2)(2+\beta)(1-\epsilon)})$. Note that 
\begin{align*}
 (k+2)(2+\beta)(1-\epsilon) & = (2k +\beta k) + (4+2\beta) - (k+2)(2+\beta)\epsilon \\
 & \le (2k +\beta k) + (4+2\beta) - k\epsilon \\
 & \le (2k+k') + (5+2\beta) - k\epsilon \\
 & < 2k + k', 
 \end{align*}
 since $k > (5+2\beta)/\epsilon$. Thus, there exists some $\epsilon' > 0$ such that we can solve $(2k+k')$-Clique in time $O(n^{2k+k'-\epsilon'})$ by a combinatorial algorithm, contradicting the combinatorial $k$-Clique hypothesis.
\end{proof}

\subsection{Co-nondeterministic Algorithm}

We say that a problem $P_A$ has a $t(n)$-time verifier if $P_A$ can be solved in nondeterministic time $t(n)$ and also its complement problem $\overline{P_A}$ can be solved in nondeterministic time $t(n)$, i.e., $P_A$ is in $\mathrm{NTIME}[t(n)] \cap \mathrm{coNTIME}[t(n)]$. Under NSETH~\cite{CarmosinoGIMPS16}, there exists no deterministic fine-grained reduction from Satisfiability to $P_A$ that would establish a SETH-based lower bound of $t(n)^{1+\delta}$ for any $\delta > 0$.

For the reader's convenience, we give a simple verifier for Colored Walk, and thus NFA Acceptance. This verifier is faster than the one for the generalization to the Viterbi Path problem designed in the arXiv version of~\cite{BackursI16}, and simplifies and extends the verifier one could obtain as a consequence of our reduction in Section~\ref{sec:cfl-reach} combined with~\cite{ChistikovMS22}. 

\begin{proposition}\label{prop:nondet}
	Directed 2-Edge Colored Walk has an $O(n^\omega + \ell n^{\omega-1})$-time verifier.
\end{proposition}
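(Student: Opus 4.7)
The plan is to design a single nondeterministic procedure that simultaneously serves as the verifier for YES and for NO instances of Directed 2-Edge Colored Walk within the target time $O(n^\omega + \ell n^{\omega-1})$. For each $c \in \{1,2\}$ let $A_c \in \{0,1\}^{n\times n}$ be the adjacency matrix of the color-$c$ edges ($A_c[u,v] = 1$ iff $(u,v)\in E$ and $c(u,v) = c$), and let $u_i \in \{0,1\}^n$ be the indicator of the set of nodes reachable from $s$ along a walk whose edge color sequence equals $c_1,\dots,c_i$. A routine induction gives $u_0 = e_s$ (the $s$-th standard basis vector) and $u_i = A_{c_i}^T u_{i-1}$ under the Boolean semiring, so the instance is a YES instance iff $u_\ell[t] = 1$. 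The verifier will nondeterministically guess the entire sequence $u_0, u_1, \dots, u_\ell$ and then deterministically confirm that it is consistent with these recurrences and with the intended value of $u_\ell[t]$ (namely $1$ on the YES side and $0$ on the NO side).

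The nontrivial step is to verify all $\ell$ equations $u_i = A_{c_i}^T u_{i-1}$ fast enough. The key idea is to batch the equations by color. For each $c \in \{1,2\}$, let $I_c := \{ i : c_i = c \}$ and assemble matrices $U_c, V_c \in \{0,1\}^{n \times |I_c|}$ whose columns (in matching order) are $u_{i-1}$ and $u_i$ for $i \in I_c$. The $\ell$ recurrences split into two groups, one per color, and the group for color $c$ is jointly equivalent to the single Boolean matrix identity $A_c^T U_c = V_c$, which the verifier checks by computing $A_c^T U_c$ and comparing it entrywise with $V_c$.

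To compute $A_c^T U_c$ I will split $U_c$ horizontally into $\lceil |I_c|/n \rceil$ blocks of width at most $n$, each giving rise to an $n \times n$ Boolean matrix product computable in $O(n^\omega)$ time via the standard reduction to integer matrix multiplication. Hence the batch for color $c$ costs $O(n^\omega \lceil |I_c|/n \rceil) = O(n^\omega + |I_c|\, n^{\omega-1})$, and summing over $c \in \{1,2\}$ yields the claimed $O(n^\omega + \ell\, n^{\omega-1})$ runtime (the entrywise comparisons and the reading of the guesses are absorbed in this bound since $n^{\omega-1} \ge n$). Correctness is immediate: the constraints $u_0 = e_s$ and $u_i = A_{c_i}^T u_{i-1}$ uniquely determine each $u_i$ as the true reachability indicator, so any guess accepted by the verifier is the true sequence, and the final check of $u_\ell[t]$ therefore faithfully reflects the YES/NO status of the instance. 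The only step that deserves a bit of care in the write-up is the rectangular Boolean product bound via block decomposition; everything else is routine bookkeeping.
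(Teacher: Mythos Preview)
Your proof is correct and essentially identical to the paper's: both guess the Boolean reachability vectors $u_0,\ldots,u_\ell$, batch the verification equations $u_i = A_{c_i}^T u_{i-1}$ by color into two rectangular Boolean matrix products, and evaluate each product via $\lceil \ell/n\rceil$ square multiplications to obtain the $O(n^\omega + \ell n^{\omega-1})$ bound. Your explicit remark that reading the $O(\ell n)$-size guess is absorbed since $n^{\omega-1}\ge n$ is a nice touch the paper leaves implicit.
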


As a consequence, we obtain verifiers with the same running time (up to subpolynomial factors) for all equivalent Colored Walk formulations and the NFA Acceptance problem over $\Sigma$ of size $n^{o(1)}$.

Thus, assuming NSETH, there cannot be any deterministic reduction from Satisfiability (or Orthogonal Vectors) that establishes a tight conditional lower bound for NFA Acceptance with $\alpha > \omega-1$ and $\beta> \omega-\alpha$; this includes in particular the important setting of $\alpha=2$ and $\beta = 1$. This gives a justification why until now, no tight SETH-based lower bound for dense NFAs could be established.

\begin{proof}[Proof of Proposition~\ref{prop:nondet}]
	Consider a Directed 2-Edge Colored Walk instance $G=(V,E)$ with edge colors $c:E\to \{1,2\}$, distinguished nodes $s,t\in V$ and color sequence $c_1,\dots, c_\ell$. Define $A_c$ as the $n\times n$ matrix with
	\[A_c[u,v] = \begin{cases} 1 & \text{if } (u,v)\in E \text{ and } c(u,v) = c\\
	0 & \text{otherwise.} \end{cases} \]
	Furthermore, define $x_0 \in \{0,1\}^n$ as the indicator vector representing the starting state $s$, i.e., $x_0[v] = 1$ if and only if $v=s$. For all $1\le i \le \ell$, we define
	\begin{equation} \label{eq:tocheck}
	x_i = A_{c_i}^T \cdot x_{i-1}, 
	\end{equation}
	where $A^T$ denotes the transpose of $A$ and $A^T \cdot x$ denotes the Boolean matrix-vector product of $A^T$ and $x$. Observe that $x_i$ is the indicator vector with $x_i[v] = 1$ if and only if there exists a walk with color sequence $c_1,\dots, c_i$ from $s$ to $v$.
	
	To obtain nondeterministic algorithms for the problem and its complement problem, we simply need to guess $x_1,\dots, x_\ell$, verify that they have been guessed correctly, and check whether $x_\ell[t] = 1$, or $x_\ell[t] = 0$, respectively.
	
	To verify that the $x_i$'s have been guessed correctly, we batch the equalities~\eqref{eq:tocheck} that we need to check into two Boolean matrix products: Specifically, for each $c\in \{1,2\}$, let $X_c$ denote the matrix containing all $x_{i}$'s with $c_i = c$ as columns, and let $X'_c$ denote the matrix containing all $x_{i-1}$'s with $c_i = c$ as columns, in the same order. For each $c\in \{1,2\}$, we need to check that
	\[ X_c = A^T X'_c. \]
	Note that this can be done via two (integer) matrix multiplications of two matrices with dimensions $n\times n$ and $n\times \ell$. If $\ell \le n$, this can be done with a single square matrix multiplication, otherwise, we can do this with $\lceil \frac{\ell}{n}\rceil$ square matrix multiplications. This results in a total verification time of $O((1+\frac{\ell}{n})n^\omega)=O(n^\omega + \ell n^{\omega-1})$, as desired.
\end{proof}

\section{Failed Algorithmic Approaches for NFA Acceptance}
\label{sec:natural-approaches}

In this section we consider some natural algorithmic approaches to NFA Acceptance and discuss why they do not falsify the NFA Acceptance hypothesis. Instead of directly working with the NFA Acceptance problem, we consider Directed Edge-2-Colored Walk (or short: Colored Walk). This is without loss of generality, as shown by the equivalence established in \lemref{equivalencehypo}.

\paragraph{Algorithms Using Fast Matrix Multiplication?}
Note that in the special case of only one color, i.e, $c_1= \ldots = c_\ell$, Colored Walk can be solved by computing the $\ell$-th matrix power $A^{\ell}$ of the adjacency matrix $A$, and then checking whether the $(s,t)$-entry is non-zero. This takes time $O(n^\omega \log \ell)$, which is much faster than $O(m \ell)$ if $\ell$ is large. 

If we could achieve the same running time $O(n^\omega \log \ell)$ in the case of two colors, then we would falsify the NFA Acceptance hypothesis (for any setting $\alpha \in [1,2]$ and $\beta > \omega - \alpha$, by \lemref{equivalencehypo}). However, as we argue next, the natural generalization of the above algorithm to two colors fails. 
Consider adjancency matrices $A^{(1)},A^{(2)}$, where the $(u,v)$-entry of $A^{(c)}$ is 1 if there is an edge of color~$c$ from $u$ to $v$, and 0 otherwise. Then we would want to compute the matrix $A^{(c_1)} A^{(c_2)} \cdots A^{(c_\ell)}$ and check whether its $(s,t)$-entry is non-zero, to solve Directed Edge-2-Colored Walk. However, in the worst case no two subsequences of $c_1,\ldots,c_\ell$ of length $\log(\ell)$ are equal, and thus we cannot save more than log-factors by precomputing products of the form $A^{(c_i)} \cdots A^{(c_j)}$ -- in contrast to the special case of only one color. Hence, we are essentially forced to compute the product $A^{(c_1)} A^{(c_2)} \cdots A^{(c_\ell)}$ one-by-one over $\ell$ steps, resulting in time $O(n^\omega \ell)$. This is worse than the simple $O(m \ell)$-time algorithm. In other words, the natural generalization of the matrix-multiplication-based algorithm for one color fails for two colors.

\paragraph{Parallel Algorithms for Colored Walk}
The matrix multiplication approach sketched above is nevertheless useful to design \emph{parallel} algorithms for Colored Walk. Indeed, matrix multiplication can be easily parallelized, e.g., on the PRAM one can multiply two $n \times n$ matrices with depth $O(\log n)$ and work $O(n^3)$. To compute the product $A^{(c_1)} A^{(c_2)} \cdots A^{(c_\ell)}$ we can recursively compute the product $A^{(c_1)} \cdots A^{(c_{\lfloor \ell/2 \rfloor})}$ and recursively compute the product $A^{(c_{\lfloor \ell/2 \rfloor+1})} \cdots A^{(c_{\ell})}$ and then multiply the results. If we perform both recursive calls in parallel, then we obtain a PRAM algorithm with depth $O(\log (n) \log (\ell))$ and work $O(n^3\ell)$. From the final matrix product $A^{(c_1)} \cdots A^{(c_\ell)}$ we can then read off the answer as the $(s,t)$-entry. 
This shows that Colored Walk (and in a similar way also NFA Acceptance) has an efficient parallel algorithm. Note that the total work $O(n^3\ell)$ of this PRAM algorithm is much more than $O(m \ell)$, and hence this parallel algorithm does not violate the NFA Acceptance hypothesis.

\paragraph{Inherently Sequential?}
The above parallel algorithm has much higher work than the sequential time complexity $O(m \ell)$. It thus seems plausible that any parallel algorithm for Colored Walk with work $\tOh(m \ell)$ has depth at least $\ell^{1-o(1)}$. In other words, it is plausible that \emph{work-optimal algorithms for Colored Walk are inherently sequential}. 

This intuition is consistent with our current knowledge of the problem. 
However, there are no tools available to prove this intuition, even conditionally.
The only tool that we have to show that certain parallel algorithms are unlikely for a problem $X$ is to show that $X$ is P-hard. However, since we have seen a PRAM algorithm with polylogarithmic depth for Colored Walk, it is very unlikely that Colored Walk is P-hard, and thus this tool is not applicable for this problem. Hence, currently there are no tools to provide evidence for the intuition that work-optimal algorithms for Colored Walk are inherently sequential.

\section{Conclusion}
\label{sec:conclusion}

In this work, we have posed the NFA Acceptance hypothesis, discussed implications of a proof or refutation and considered its connections to standard assumptions in fine-grained complexity theory. 
At the very least, the NFA Acceptance hypothesis should be understood as a technical challenge towards progress for many interesting problems. Specifically, without refuting it, we cannot expect:
\begin{itemize}
	\item Context-Free Language Reachability in time $O(n^{3-\epsilon})$,
	\item the Word Break problem in time $O(m+nm^{1/3-\epsilon})$,
	\item approximating the Viterbi path in time $O((m\ell)^{1-\epsilon})$,
	\item pattern matching in string-labeled graphs in time $O((m\ell)^{1-\epsilon})$,
	\item regular path queries in graph databases in time $O((|q|\cdot m)^{1-\epsilon})$, and
	\item polynomial improvements for any data structure problems and dynamic problems with tight OMv-hardness, including problems from graph algorithms~\cite{HenzingerKNS15,Dahlgaard16,AbboudD16,HenzingerW21}, string algorithms~\cite{CliffordGLS18,KempaK22,CliffordGK0U22}, computational geometry~\cite{DallantI22,LauR21}, linear algebra~\cite{JiangPW23}, formal languages~\cite{GibneyT21}, and database theory~\cite{BerkholzKS17,CaselS23,KaraNOZ23}.
\end{itemize}
We leave the following open problems:
\begin{itemize}
	\item Refute the NFA Acceptance hypothesis or prove that it is implied by any standard fine-grained hypothesis.
	\item Prove an equivalence of NFA Acceptance and Directed $n^\gamma$-Edge-Colored Walk for any $\gamma > 0$, ideally for all $0 < \gamma \le 2$.
	\item Prove an equivalence of NFA Acceptance with and without $\eps$-transitions for densities $m=\Theta(n^{\alpha})$ with $1\le \alpha < 2$.
\end{itemize}

\paragraph{Acknowledgements}
We thank Virginia Vassilevska Williams, Thatchaphol Saranurak, and Rupak Majumdar for helpful discussions on the NFA Acceptance hypothesis.
We also thank Thatchaphol Saranurak for popularizing this hypothesis by tweeting the open problem on Colored Walk and its implication for OMv, crediting one of the authors of this paper.

\printbibliography

\end{document}